\title{Recovery of Localization Errors in Sensor Networks \mbox{using Inter-Agent Measurements}}
\author{Shiraz Khan and Inseok Hwang 
\thanks{\textcolor{blue}{
© 2023 IEEE. Personal use of this material is permitted. Permission from IEEE must be obtained for all other uses, in any current or future media, including reprinting/republishing this material for advertising or promotional purposes, creating new collective works, for resale or redistribution to servers or lists, or reuse of any copyrighted component of this work in other works.
}}
\thanks{This research is funded by the Secure Systems Research Center (SSRC) at the Technology Innovation Institute (TII), UAE. The authors are grateful to Dr. Shreekant (Ticky) Thakkar and his team members at the SSRC for their valuable comments and support.}%
\thanks{The authors are with School of Aeronautics and Astronautics, Purdue University,
West Lafayette, IN 47906
        (Email: {\tt\small shiraz@purdue.edu, ihwang@purdue.edu})}%
}
\newtheorem{theorem}{Theorem}
\newtheorem{lemma}{Lemma}
\newtheorem{proposition}{Proposition}
\newtheorem{remark}{Remark}
\DeclareMathOperator{\rank}{rank}
\DeclareMathOperator{\vspan}{span}
\def\BibTeX{{\rm B\kern-.05em{\sc i\kern-.025em b}\kern-.08em
    T\kern-.1667em\lower.7ex\hbox{E}\kern-.125emX}}
\begin{document}
\maketitle
\begin{abstract}
A practical challenge which arises in the operation of sensor networks is the presence of sensor faults, biases, or adversarial attacks, which can lead to significant errors incurring in the localization of the agents, thereby undermining the security and performance of the network.
We consider the problem of identifying and correcting the localization errors
using inter-agent measurements, such as the distances or bearings from one agent to another, which can serve as a redundant source of information about the sensor network's configuration.
The problem is solved by searching for a block sparse solution to an underdetermined system of equations, where the sparsity is introduced via the fact that the number of localization errors is typically much lesser than the total number of agents. Unlike the existing works, our proposed method does not require the knowledge of the identities of the \textit{anchors}, i.e., the agents that do not have localization errors.
We characterize the necessary and sufficient conditions on the sensor network configuration under which a given number of localization errors can be uniquely identified and corrected using the proposed method. 
The applicability of our results is demonstrated numerically by processing inter-agent distance measurements using a sequential convex programming (SCP) algorithm to identify the localization errors in a sensor network.
\end{abstract}

\begin{IEEEkeywords}
compressive sensing, fault detection and identification, rigidity theory, sensor networks
\end{IEEEkeywords}
\section{INTRODUCTION}
Groups of autonomous systems such as rovers and uncrewed aerial vehicles (UAVs) can accomplish complex tasks in a collaborative manner.
In order to facilitate this collaboration, they are typically interconnected in terms of their sensing and/or communication capabilities, forming a sensor network. For instance, an agent of a sensor network may be able to measure its \textit{distances} from the other agents in its vicinity using received signal strength (RSS) or time of arrival (ToA) based ranging \cite{cao2021relative, sadowski2018rssi}. Similarly, cameras pointed from one vehicle to another as well as the angle of arrival (AoA) of inter-agent communications are examples of \textit{bearing} measurements which may be available between agents \cite{deghat2014localization, zhang2020bearing_aoa}. The design of efficient algorithms which utilize distance and bearing measurements to enhance the localization (i.e., position estimation) capabilities of sensor networks has been well-studied in the literature, using the theories of distance and bearing rigidity \cite{decent_dist_rigidity_2015,zhao2016localizability,zhao2019bearing}. Similarly, the theory of weak rigidity was developed recently in \cite{jing2018weak}, which can model sensor networks in which the agents are able to measure the subtended angles of their neighbors, e.g., using the time difference of arrival (TDoA) of communications or scanning radars \cite{subtended_angle_scan_TDOA}. 

The academic and practical importance of inter-agent measurements stems from the observation that the onboard localization 
mechanisms of autonomous systems can incur significant errors. This is because the sensors which contribute to the position estimation, including satellite-based geopositioning systems, are susceptible to faults (such as bias, loss of signal, etc.) and cyberattacks (such as sensor spoofing and replay attacks) \cite{multipath_2013, ranyal2021unmanned}. Motivated by these concerns, several researchers have proposed the use of inter-agent measurements for detecting and identifying the localization errors \cite{xia2013distributed, lim2019detecting}. Inter-agent measurements are especially useful for identifying localization errors in low-cost and resource-constrained sensor network applications, in which it may not be possible to install expensive and/or heavy onboard components such as LiDAR sensors. 
Despite the practical importance and relevance of this problem, a formal theoretical treatment of it is lacking in the literature. Most theoretical works have focused on directly using the inter-agent measurements for onboard localization \cite{Eren_rigidity_randomness_2004, decent_dist_rigidity_2015,zhao2016localizability,zhao2019bearing, zhang2020bearing_aoa}, rather than addressing the question of how these measurements can be used to identify faults or cyberattacks. 
In particular, they assume that a subset of the agents, called as \textit{anchors} or \textit{beacons} in the literature, are known to be correctly localized, precluding the possibility that the anchors themselves may have localization errors; in this study, we drop the assumption that the set of correctly localized agents is known \textit{a priori}.
 
Generally, the problem of fault and/or cyberattack identification in multi-sensor systems is solved using combinatorial approaches such as fault tree analysis, or by using a bank of observers
\cite{faultdetection_2012}\cite{yang2018multi}, which are intractable for large-scale sensor networks as their computational complexity
scales exponentially with respect to the size of the network. The authors in \cite{nstacked_2015} showed that the computational complexity of fault or cyberattack detection can be reduced by
reformulating it as an $l_0$ minimization\footnote{In the literature, this is sometimes called $l_0$ ``norm" minimization; we avoid this usage as the $l_0$ operation does not satisfy the axioms of a norm
\cite{nstacked_2015}.} problem, where the $l_0$ operation, as defined, counts the number of non-zero elements of a vector.
In addition to having a lower computational complexity, the $l_0$ minimization approach is able to uniquely recover the solution of an underdetermined system of equations.
This has motivated researchers to make the additional assumption on the sparsity of the error vector (i.e., they assume that the number of faults/cyberattacks is small) enabling the use of $l_0$  minimization (or when the solution is known to be a block vector, $l_2/l_0$  minimization \cite{latushkin_null_2015}) for the detection and identification of sparse sensor faults and cyberattacks \cite{ozay_sparse_2013,nstacked_2015,attack_sparse_Lu_Yang_2022}. 

Nevertheless, the $l_0$ minimization problem is still computationally expensive due to its combinatorial nature, which motivates the search for an efficient algorithm that scales well (e.g., linearly) with respect to the number of agents in the sensor network. Inspired by the abundant literature on compressive sensing, the authors of \cite{nstacked_2015} relaxed the $l_0$ minimization problem to an $l_1$ minimization problem, which is a convex optimization problem that admits efficient algorithms for finding its solution. However, $l_1$ (or in the case of block vectors, $l_2/l_1$) minimization can only recover a given number of errors uniquely and exactly when one of certain conditions is met, such as the (block) restricted isometry property or the (block) null space property (NSP) \cite{latushkin_null_2015, robust_NSP_2017, Block_COSAMP_2019}. These conditions are often ignored or assumed to hold true in the literature, as they depend on the measurement matrix and are difficult to establish \textit{a priori}, especially when the measurement matrix is a part of the problem specification (i.e., cannot be chosen arbitrarily) \cite{ozay_sparse_2013, yuan2020gps}.

In this paper, we propose a novel method for processing the inter-agent distance or bearing measurements to recover (i.e., identify and reconstruct) the localization errors in $2$ and $3$-dimensional sensor network configurations, while addressing the aforementioned shortcomings of the existing works.
The proposed method is developed by reformulating the error recovery problem as an $l_2/l_q$ minimization problem, where $0\leq q \leq 1$. While the existing solutions for localization error identification assume that some of the agents are classified as anchors \textit{a priori}, the proposed $l_2/l_q$ minimization approach does not require this assumption; to our knowledge, the possibility of dropping the foregoing assumption has not been recognized or explored in the literature previously.
Furthermore, we establish the conditions on the sensor network's connectivity and configuration under which a given number of localization errors can be uniquely identified and corrected using the proposed $l_2/l_q$ minimization method. 
%
Our analysis is based on the conditions for the recovery of block sparse signals borrowed from the literature on compressive sensing, which manifest as purely geometric properties of the sensor network's configuration. On the other hand, results from distance and bearing rigidity theory are used to establish the sufficient conditions on the sensor network's connectivity.
To demonstrate the applicability of our results, a sequential convex programming (SCP) algorithm is proposed for solving the $l_2/l_1$ minimization problem efficiently, while accommodating the nonlinearity in the measurement model.
The proposed SCP algorithm is validated using a numerical example of identifying the localization errors in a sensor network using inter-agent distance measurements.

The paper is organized as follows; Section \ref{sec:problem_formulation} presents the mathematical descriptions of the sensor network, localization errors, and inter-agent measurements. In Section \ref{sec:optimization}, the main ideas from compressive sensing are introduced and motivated. Section \ref{sec:recoverability} establishes the conditions for the recoverability of localization errors using inter-agent measurements in the noise-free case. Section \ref{sec:robustness} extends our results to accommodate practical considerations such as measurement noise, imperfect estimates, and linearization error, and presents the SCP algorithm to solve the $l_2/l_1$ minimization problem in a robust manner.
Finally, the applicability of our approach is demonstrated through a numerical example in Section \ref{sec:numerical}.

\textit{Notation:} 
Given a block vector $\bold v\in\mathbb R^{dn}$ which is partitioned into $n$ blocks of length $d$ each, $\bold v[i]$ refers to the $i^{th}$ block of $\bold v$. The $l_q$ norm of $\bold v$ is denoted by $\|\bold v\|_{q}$.
Similar to \cite{efficient_block_sparse_2010, wang_wang_xu_2013, latushkin_null_2015, robust_NSP_2017}, we define the following notation:
\[
\|\bold v\|_{2,q}=
\begin{cases}
\begin{array}{lc}
     \sum_{i=1}^{n} \mathbb I\Big(\|\bold v[\small i]\|_2>0\Big) \quad & q=0 \vspace{2pt}\\
     \Big( \sum_{i=1}^{n} \|\bold v[\small i]\|_2^q\Big)^{1/q} & 0<q<\infty \vspace{2pt} \\
     \max_{1\leq i\leq n} \big(\|\bold v[\small i]\|_2\big) & q = \infty,
\end{array}
\end{cases}
\]
where $\mathbb I(\|\bold v[\small i]\|_2>0)$ is the indicator function which is equal to $1$ when $\|\bold v[\small i]\|_2>0$ and $0$ otherwise.
As defined, $\|\bold v\|_{2,0}$ counts the number of non-zero blocks of $\bold v$, referred to as its \textit{block sparsity}. 
Given an index set $\mathcal S\subset \{1, 2, \dots, n\}$, $\mathcal S^\complement$ denotes its complement, $\{1, 2, \dots, n\}\backslash \mathcal S$. $\bold v_{\mathcal S}$ is the vector whose support is restricted to the blocks corresponding to $\mathcal S$, i.e., 
\[\bold v_{\mathcal S}[i] = \begin{cases}
\begin{array}{ll}
\bold v[i] \quad & i\in \mathcal S\\
\bold 0 & i\in \mathcal S^\complement,
\end{array}
\end{cases}
\]
where $\bold 0$ denotes a vector of zeros. $\bold 1_d$ denotes a vector of ones of length $d$ and $\bold I_d$ is the $d\times d$ identity matrix. For a matrix $\bold A$, we let 
$\ker(\bold A)$ denote its kernel or null space. 
Lastly, `$\otimes$' is the Kronecker product operation for vectors and matrices.

\section{PROBLEM FORMULATION}
\label{sec:problem_formulation}
\subsection{Sensor Network Model}
Let $\mathcal G = (\mathcal V, \mathcal E)$ be an undirected graph, where the vertices $\mathcal V=\lbrace 1, 2, \dots, |\mathcal V|\rbrace$ represent the sensor network agents and the edges $\mathcal E \subseteq \mathcal V \times \mathcal V$ represent the availability of inter-agent (distance or bearing) measurements. 
The position of each agent is represented by a vector in $\mathbb R^{d}$, where the dimension $d$ is either $2$ or $3$ depending on the application.
The collective configuration of the sensor network can be represented by the block vector
\begin{equation}
\bold p = \big[\ \bold p[1]^\top\ \bold p[2]^\top\ \bold p[3]^\top\ \dots\ \bold p[\text{\footnotesize{$|\mathcal V|$}}]^\top\ \big]^\top \in \mathbb R^{d|\mathcal V|},
\end{equation}
where $\bold p[i]\in\mathbb R^d$ is the $i^{th}$ agent's position. We denote the components of $\bold p[i]$ as $\bold p[i] = \big[\bold p[i]_1\ \bold p[i]_2\big]^\top$ when $d=2$, and similarly for $d=3$. We assume that two agents' positions cannot coincide. 
While the results of this paper can be extended to the case where $\bold p$ evolves according to a dynamical model, we discuss the static case (where $\bold p$ is fixed) in order to keep the analysis and presentation concise.

\subsection{Position Estimates}
Each agent uses a set of onboard sensors for localization (i.e., the estimation of its position). The estimated positions of the agents are collectively represented by the block vector $\hat{\bold p} \in \mathbb R^{d|\mathcal V|}$, such that $\hat{\bold p}[i] \in \mathbb R^d$ is the $i^{th}$ agent's position estimate. Let $\mathcal D\subseteq \mathcal V$ be the set of agents that have localization errors, e.g., due to sensor faults or spoofing attacks.
To keep the analysis succinct, for now we assume that the agents in $\mathcal D^\complement$ have perfect estimates, such that $\hat {\bold p}_{\mathcal D^\complement} = \bold p _{\mathcal D ^\complement}$, and
\[\|\hat{\bold p}[i] - \bold p[i]\|_2 > 0 \ \Leftrightarrow\  \ i\in \mathcal D\]
%
We define $ \bold x \coloneqq \bold p - \hat{\bold p}$ as the vector of localization errors. Thus, $\bold x$ has exactly $|\mathcal D|$ non-zero blocks, i.e.,
$\|\bold x\|_{2,0} = |\mathcal D|$, and
$\bold x$ is said to be \textit{block $|\mathcal D|$-sparse}. In Section \ref{sec:robustness}, we extend the above formulation to incorporate the case where the agents in $\mathcal D^\complement$ have imperfect position estimates.

\begin{remark}
One can also consider an alternative problem in which the agents in $\mathcal D$ are maliciously misreporting their positions to the other agents (via broadcast-based communication), so that $\hat {\bold p}[i] \neq \bold {p}[i]\ \forall i \in \mathcal D$. In either problem, the block vector $\bold x$ needs to be reconstructed using the inter-agent measurements, where the non-zero blocks of $\bold x$ correspond to the agents in $\mathcal D$.
Hence, either problem can be solved using the proposed approach described in the remainder of this paper, due to the symmetric relationship between the two problems.
\end{remark}

\subsection{Distance and Bearing Measurements}
\label{sec:setup_measurements}
The agents adjacent to each other in $\mathcal G$ are able to obtain inter-agent measurements which can be used to reconstruct the localization error vector, $\bold x$. Given a function $\boldsymbol \phi:\mathbb R^d \times\mathbb R^d\rightarrow \mathbb R^m$, the inter-agent measurement model is defined as
\begin{equation}
\bold y_{ij} = \boldsymbol \phi(\bold p[i], \bold p[j]) + \bold e_{ij} \quad \forall (i,j)\in \mathcal E
\label{eq:edge_measurement}
\end{equation}
where $\bold e_{ij}\in \mathbb R^m$ is a bounded noise term.
Equation (\ref{eq:edge_measurement}) can be expressed in the block vector form as
\begin{equation}
    \bold y = \boldsymbol \Phi(\bold p) + \bold e
    \label{eq:edge_measurement_block}
\end{equation}
where $\bold y=[\bold y_{ij}]$, $\bold e=[\bold e_{ij}]$, and $\boldsymbol \Phi(\bold p) = \big[\boldsymbol \phi(\bold p[i], \bold p[j])\big]$ are block vectors.

We consider two types of measurements having the form of (\ref{eq:edge_measurement}) that commonly arise in sensor network applications, namely distance and bearing measurements. In the case where the agents can measure their distances from each other, we have 
\begin{equation}
\boldsymbol \phi_D(\bold p[i], \bold p[j])\coloneqq\frac{1}{2}\|\bold p[i]-\bold p[j]\|_2^2
\label{eq:distance_model}
\end{equation}
where the constant $\sfrac{1}{2}$ is introduced to keep the forthcoming notation concise. 
Let $\bold \Phi_D(\bold p) = \big[\boldsymbol \phi_D(\bold p[i], \bold p[j])\big]
$ denote the block vector of length $|\mathcal E|$ comprising all the inter-agent distance functions. The \textit{distance rigidity matrix} \cite{trinh2016further}, denoted by $\bold R_D(\bold p)$, is then defined as 
\begin{equation}
\bold R_D(\bold p)\coloneqq \nabla \bold \Phi _D({\bold p})\in R^{|\mathcal E|\times d|\mathcal V|},
\end{equation}
which is the Jacobian of the measurement vector.
Thus, the $k^{th}$ row of $\bold R_D(\bold p)$ corresponds to the $k^{th}$ edge of the graph and is of the form
\[
\begin{bmatrix} \ 0 & \dots & 0 & (\bold p[i] - \bold p[j])^\top & \dots & (\bold p[j] - \bold p[i])^\top & \dots\  \end{bmatrix}
\]
\noindent where $(i,j)$ is the $k^{th}$ edge in $\mathcal E$. Similarly, in the case of bearing measurements, we define 
\begin{equation}
\boldsymbol \phi_B(\bold p[i], \bold p[j])\coloneqq\frac{\bold p[i] - \bold p[j]}{\|\bold p[i] - \bold p[j]\|_2}
\end{equation}
which are the inter-agent unit vectors.
In this case, the Jacobian $\nabla \bold \Phi_B(\bold p)$ is denoted by $\bold R_B(\bold p) \in \mathbb R^{d|\mathcal E|\times d|\mathcal V|}$ and is called the \textit{bearing rigidity matrix} \cite{zhao2019bearing} whose $k^{th}$ row is of the form:
\begin{align*}
\begin{bmatrix} & 0 & \dots & 0 & \frac {\bold P_{ij}(\bold p)}{\|\bold p[i] - \bold p[j]\|_2} & \dots & \frac{\bold P_{ji}(\bold p)}{\|\bold p[i] - \bold p[j]\|_2} & \dots & \end{bmatrix},
\end{align*}
where $(i,j)$ is the $k^{th}$ edge in $\mathcal E$, and 
\begin{equation}
\bold P_{ij}(\bold p)=\bold I_d - 
\text{\footnotesize{
$\frac{(\bold p[i] - \bold p[j])(\bold p[i] - \bold p[j])^\top}{\|\bold p[i] - \bold p[j]\|_2^2}$
}}
\end{equation}
is the projection matrix corresponding to the subspace orthogonal to $\bold p[i] - \bold p[j]$.

In real-time applications, such as when using an Extended Kalman Filter (EKF) estimator, the Jacobian matrix is evaluated at the estimated state instead of the true state. For example,
as the squared Euclidean norm $\|\cdot \|_2^2$ is continuously differentiable everywhere, we can use the Taylor series approximation of $\bold \Phi_D(\bold p)$ about $\hat {\bold p}$ to establish
\begin{align} 
\bold \Phi_D(\bold{p}) \approx \bold \Phi_D(\hat{\bold p}) + \bold R_D(\hat {\bold p})(\bold p - \hat{\bold p})
\label{eq:taylor_series}
\end{align}
where the approximation error, which is on the order of $\| \hat{\bold p} - \bold p \|^2_2$, is assumed to be small. This is equivalent to the assumption that $\bold x$ is bounded, i.e., the agents have not deviated too far from their estimated positions. In Section \ref{sec:robustness}, we discuss how the linearization error in (\ref{eq:taylor_series}) can be compensated by using a bootstrapping approach.
%
\section{Block-Sparse Optimization Problem}%
\label{sec:optimization}
\noindent
Consider the residual vector $\bold z$, defined as 
\begin{equation}
    \bold z \coloneqq \bold y - \bold \Phi(\bold {\hat p}) 
\end{equation}
Using (\ref{eq:edge_measurement_block}) and (\ref{eq:taylor_series}), we have
\begin{align}
\bold z = \bold R (\bold p - \bold {\hat p}) +\bold e &= \bold R \bold x + \bold e
\end{align}
where the notation $\bold R$ is introduced for brevity;
it denotes either the distance or bearing rigidity matrix ($\bold R_D(\bold p)$ or $\bold R_B(\bold p)$, respectively) depending on the context. 

A naive approach for recovering the localization errors in the noise-free case (i.e., the case where $\bold e=\bold 0$) is to solve the system of linear equations
$\bold z = \bold R\hat{\bold x}$ for $\hat{\bold x}$. However, there are two issues with this approach:
\begin{itemize}
    \item \textit{Non-Uniqueness of the Solution}: Both the distance and bearing rigidity matrices \big($\bold R_D$ and $\bold R_B$, respectively\big) have non-trivial null spaces (as shown in Section \ref{sec:recoverability}), irrespective of the configuration and connectivity of the sensor network. Thus, the system of linear equations $\bold z = \bold R\hat{\bold x}$ does not have a unique solution; it is said to be underdetermined.
    \item \textit{Non-Sparsity of the Solution}: For $0<s<|\mathcal V|$, the set of all block $s$-sparse vectors in $\mathbb R^{d|\mathcal V|}$ is a union of lower-dimensional subspaces, each having the dimension $ds$ \cite{Eldar_Mishali_2009}. Consequently, the set of block $s$-sparse vectors has a measure of $0$ in the ambient space $\mathbb R^{d|\mathcal V|}$, meaning that general iterative algorithms used to solve the foregoing system of equations will yield non-sparse solutions with overwhelming probability. This is equivalent to the classification of all the agents in $\mathcal V$ as having localization errors (i.e., $\mathcal D = \mathcal V$), which may be undesirable.
\end{itemize}
In sensor network applications where it is more likely that $|\mathcal D|<|\mathcal V|$, under certain conditions on the block-sparsity of $\bold x$, it is possible to uniquely recover it by solving the following optimization problem: 
\begin{align*}
\begin{array}{rclc}
   \textsc{P1:}  &  &
\begin{array}{rl}
\underset{\hat{\bold x}}{\textnormal{minimize}}\quad  
&\| \hat{\bold x}\|_{2,0}  \\
\textnormal{subject to} \quad
& \bold z = \bold R\hat{\bold x}
\end{array} & \qquad \qquad
\end{array}
\end{align*}
which we call the (constrained) $l_2/l_0$ minimization problem. The function $\|\cdot\|_{2,0}$ is not convex, however, and $l_2/l_0$ minimization is known to be a computationally hard problem \cite{robust_NSP_2017}.
Therefore, it is desirable to study the recoverability of $\bold x$ using the following relaxed $l_2/l_q$ minimization problem, where $0<q\leq1$:
\begin{align*}
\begin{array}{rclc}
   \textsc{P2:}  &  &
\begin{array}{rl}
\underset{\hat{\bold x}}{\textnormal{minimize}}\quad  
&\| \hat{\bold x}\|_{2,q}  \\
\textnormal{subject to} \quad
& \bold z = \bold R\hat{\bold x}
\end{array} & \qquad \qquad
\end{array}
\end{align*}
In particular, $l_2/l_1$ minimization is a convex optimization problem and can be solved using second-order cone programming (SOCP) solvers \cite{wang_wang_xu_2013}, block variants of basis pursuit and matching pursuit solvers \cite{efficient_block_sparse_2010, giaralis_bomp_2021, Block_COSAMP_2019}, as well as distributed algorithms \cite{mota2011basis}. The nonlinearity and noise in the measurement model can also be accommodated by using basis pursuit denoising \cite{Eldar_Mishali_2009}. This makes the $l_2/l_q$ minimization problem (\textsc{P2}) an attractive alternative to the $l_2/l_0$ version (\textsc{P1}). 
Moreover, it has been noted that $l_2/l_q$ optimization outperforms $l_q$ optimization when the solution is known to be block sparse \cite{Eldar_Mishali_2009,efficient_block_sparse_2010}. 

\section{Conditions for Recoverability}
\label{sec:recoverability}
In this section, we derive conditions on the sensor network $\mathcal G$, its configuration $\bold p$, and the error vector $\bold x$, under which one can uniquely and exactly recover $\bold x$ by solving either the $l_2/l_0$ minimization problem P1, or the $l_2/l_q$ minimization problem P2 (where $0<q\leq 1$). The noise-free case is discussed first, which gives us qualitative conditions for recoverability, whereas Section \ref{sec:robustness} incorporates quantitative considerations such as measurement noise.

\subsection{$l_2/l_0$ Minimization}
From the literature on compressive sensing, we have the following lemma about the recoverability of $\bold x$ using the $l_2/l_0$ minimization problem P1.

\begin{lemma}[$l_2/l_0$ Recoverability \cite{afdideh2016recovery}]
Let $\bold x$ be a solution to the $l_2/l_0$ minimization problem P1. Then, $\bold x$ is the unique solution to P1 if
\begin{equation}
    \|\bold x\|_{2,0} < \hspace{1pt}\frac{1}{2}\hspace{3pt}{
            \min \left\lbrace \|\bold v\|_{2,0} \hspace{1pt}\big\vert\hspace{1pt}
                 \bold v \in \ker(\bold R), \bold v\neq \bold0
                 \right\rbrace
        }
\label{eq:block-spark}
\end{equation}
\label{lemma:l0_recoverability}
\end{lemma}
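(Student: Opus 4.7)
The plan is to argue by contradiction, following the standard ``spark'' argument from compressive sensing adapted to the block-sparse setting. Suppose $\bold x$ satisfies the hypothesis of the lemma but is \emph{not} the unique minimizer of \textsc{P1}. Then there exists some $\bold x' \neq \bold x$ with $\bold R \bold x' = \bold z = \bold R \bold x$ and $\|\bold x'\|_{2,0} \leq \|\bold x\|_{2,0}$. The goal is to derive a contradiction with the hypothesis \eqref{eq:block-spark}.

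First I would construct the vector $\bold v \coloneqq \bold x - \bold x'$. By linearity, $\bold R \bold v = \bold R \bold x - \bold R \bold x' = \bold 0$, and since $\bold x \neq \bold x'$ we have $\bold v \neq \bold 0$, so $\bold v$ is a nonzero element of $\ker(\bold R)$. Hence $\bold v$ is a candidate for the minimization on the right-hand side of \eqref{eq:block-spark}, giving
\[
\|\bold v\|_{2,0} \;\geq\; \min\bigl\{\|\bold w\|_{2,0} \,\big\vert\, \bold w \in \ker(\bold R),\ \bold w \neq \bold 0\bigr\} \;>\; 2\|\bold x\|_{2,0}.
\]

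Next I would bound $\|\bold v\|_{2,0}$ from above using the block support structure. For each block index $i$, if $\bold v[i] \neq \bold 0$, then at least one of $\bold x[i]$ or $\bold x'[i]$ must be nonzero (otherwise $\bold v[i] = \bold 0 - \bold 0 = \bold 0$). Thus the block support of $\bold v$ is contained in the union of the block supports of $\bold x$ and $\bold x'$, which yields
\[
\|\bold v\|_{2,0} \;\leq\; \|\bold x\|_{2,0} + \|\bold x'\|_{2,0} \;\leq\; 2\|\bold x\|_{2,0},
\]
where the second inequality uses the assumed minimality $\|\bold x'\|_{2,0} \leq \|\bold x\|_{2,0}$. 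Combining this with the previous strict inequality gives $2\|\bold x\|_{2,0} < \|\bold v\|_{2,0} \leq 2\|\bold x\|_{2,0}$, a contradiction, which completes the proof.

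There is no real obstacle here; the argument is a direct transcription of the scalar spark bound to the block-sparse setting, relying only on (i) linearity of $\bold R$, (ii) the subadditivity of $\|\cdot\|_{2,0}$ with respect to block-support union, and (iii) the definition of $\ker(\bold R)$. The only place that requires a moment's care is verifying subadditivity of $\|\cdot\|_{2,0}$ on block vectors, which follows immediately from the indicator-function definition given in the notation section: $\mathbb I(\|\bold v[i]\|_2 > 0) \leq \mathbb I(\|\bold x[i]\|_2 > 0) + \mathbb I(\|\bold x'[i]\|_2 > 0)$ for each $i$.
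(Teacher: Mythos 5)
Your argument is correct: it is the standard (block-)spark uniqueness argument, and the paper itself offers no proof of this lemma, simply citing it from the compressive-sensing literature, where this is exactly the argument used. The only point worth noting is that a competing minimizer $\bold x'$ in fact satisfies $\|\bold x'\|_{2,0} = \|\bold x\|_{2,0}$, so your inequality $\|\bold x'\|_{2,0} \leq \|\bold x\|_{2,0}$ holds a fortiori, and the subadditivity of $\|\cdot\|_{2,0}$ over block supports is immediate from the indicator-function definition as you observe; there is no gap.
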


Note that Lemma \ref{lemma:l0_recoverability} establishes the uniqueness of the solution of P1 in spite of $\bold R$ having a non-trivial null space.
The quantity $\min \left\lbrace \|\bold v\|_{2,0} \hspace{1pt}\big\vert\hspace{1pt}
                 \bold v \in \ker(\bold R), \bold v\neq \bold0
                 \right\rbrace$ 
                 is called the \textit{block spark} of the matrix $\bold R$ \cite{afdideh2016recovery}.
Although it is difficult to compute in general, we can characterize the block spark of a well-studied class of rigidity matrices, namely those that have maximal rank, using the following lemmas. In what follows, given two subspaces $V_1, V_2\subseteq \mathbb R^n$, $V_1+V_2$ refers to the sum $\{\bold v_1+\bold v_2\big|\bold v_1\in V_1, \bold v_2\in V_2\}$ which is also a subspace. Given integers $n$ and $k$, $\binom{n}{k}$ denotes the corresponding binomial coefficient.

\begin{lemma}[Null Space of $\bold R_D$]
For $d=2$ or $3$, let $S_d$ be the space of all $d\times d$ real skew-symmetric matrices. We have 
\begin{equation}
\left\lbrace \bold 1_{|\mathcal V|}\otimes \bold q 
\hspace{1pt}\big|\hspace{1pt}
\bold q \in \mathbb R^d \right\rbrace
+ \left\lbrace(\bold I_{|\mathcal V|} \otimes \bold S)\bold p
\hspace{1pt}\big|\hspace{1pt}
\bold S \in S_d\right\rbrace 
\subseteq \ker\left(\bold R_D\right)
\label{eq:rD_null_space}
\end{equation}
As a consequence, $\rank(\bold R_D)\leq d|\mathcal V| - \binom{d+1}{2}$. \vspace{2pt}
\label{lemma:distance_rigidity_properties}
\end{lemma}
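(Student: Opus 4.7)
The plan has two parts. First, I verify that both generating sets lie in $\ker(\bold R_D)$ by expanding one row at a time, using the explicit row structure of $\bold R_D$ recalled earlier in the paper. Second, I count the dimension of the sum to deduce the rank bound.

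For the first part, consider a row of $\bold R_D$ indexed by edge $(i,j)$: it has $(\bold p[i]-\bold p[j])^\top$ in block $i$, $(\bold p[j]-\bold p[i])^\top$ in block $j$, and zeros elsewhere. Pairing this row with $\bold v = \bold 1_{|\mathcal V|}\otimes \bold q$ gives $(\bold p[i]-\bold p[j])^\top \bold q + (\bold p[j]-\bold p[i])^\top \bold q = 0$, so every translation lies in $\ker(\bold R_D)$. Pairing instead with $\bold v = (\bold I_{|\mathcal V|}\otimes \bold S)\bold p$, whose $i$-th block is $\bold S\bold p[i]$, gives
\[
(\bold p[i]-\bold p[j])^\top \bold S\bold p[i] + (\bold p[j]-\bold p[i])^\top \bold S\bold p[j] = (\bold p[i]-\bold p[j])^\top \bold S (\bold p[i]-\bold p[j]),
\]
which vanishes because $\bold u^\top \bold S \bold u = 0$ for every skew-symmetric $\bold S$.

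For the second part, let $T$ and $R$ denote the translation and rotation summands in the statement. The map $\bold q \mapsto \bold 1_{|\mathcal V|}\otimes \bold q$ is clearly injective, so $\dim T = d$. Assuming the agents' positions affinely span $\mathbb R^d$, which is the natural non-degeneracy condition in this setting, the map $\bold S \mapsto (\bold I_{|\mathcal V|}\otimes \bold S)\bold p$ is also injective on $S_d$, giving $\dim R = \dim S_d = \binom{d}{2}$. The intersection $T\cap R$ is trivial: if $\bold S\bold p[i] = \bold q$ for all $i$, then $\bold S(\bold p[i]-\bold p[j]) = 0$ for every pair $(i,j)$, and the same affine-span assumption forces $\bold S = 0$ and hence $\bold q = 0$. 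Therefore $\dim(T+R) = d + \binom{d}{2} = \binom{d+1}{2}$, so $\ker(\bold R_D)$ has dimension at least $\binom{d+1}{2}$, which yields the stated rank bound by the rank-nullity theorem.

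The main obstacle is the dimension count rather than the membership check, which is essentially symbolic. The identity $\dim(T+R) = \binom{d+1}{2}$ is cleanest under the affine-span (or generic-position) hypothesis; for truly degenerate configurations, such as collinear agents in $\mathbb R^3$, the trivial-motion subspace drops in dimension, but the rigidity matrix then picks up extra edge-level dependencies so the inequality $\rank(\bold R_D)\leq d|\mathcal V| - \binom{d+1}{2}$ still holds. A fully uniform write-up would treat those edge cases via a dimension-reduction argument to a lower-dimensional rigidity matrix, but they are outside the regime of practical interest for the sensor networks considered here.
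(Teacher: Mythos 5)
Your proof is correct and follows essentially the same route as the paper's: verify that both summands annihilate each row of $\bold R_D$, then count dimensions to get the rank bound — though where the paper writes out an explicit basis of $S_3$ and leaves the verification to the reader, you supply the uniform identity $(\bold p[i]-\bold p[j])^\top\bold S(\bold p[i]-\bold p[j])=0$ from skew-symmetry, which is cleaner and dimension-independent. You also correctly flag that the dimension count $\dim(T+R)=\binom{d+1}{2}$ (and hence the rank bound) implicitly requires the configuration to affinely span $\mathbb R^d$, a non-degeneracy caveat the paper glosses over.
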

\begin{proof}
Consider the case of $d=3$.
The first summand on the left hand side of (\ref{eq:rD_null_space}) contains block vectors of the form
\begin{equation}
\bold v = \begin{bmatrix}
\bold q^\top & \bold q^\top & \dots & \bold q^\top    
\end{bmatrix}^\top
\end{equation}
for some vector $\bold q \in \mathbb R^3$.
It can be checked that, for block vectors of this form,  $\bold R_D \bold v = \bold 0$.
Focusing on the second summand of (\ref{eq:rD_null_space}), we can observe that $S_3$ is a vector space, as it is closed under the scalar multiplication and matrix addition operations.
Consider the following basis for $S_3$:
\begin{equation}
    \left\lbrace
    \begin{bmatrix*}[r] 0 & 1 & 0\\ -1 & 0 & 0 \\ 0 & 0 & 0 \end{bmatrix*},\begin{bmatrix*}[r] 0 & 0 & 1\\ 0 & 0 & 0 \\ -1 & 0 & 0 \end{bmatrix*},\begin{bmatrix*}[r] 0 & 0 & 0\\ 0 & 0 & -1 \\ 0 & 1 & 0 \end{bmatrix*}
    \right\rbrace
\end{equation}
Multiplying each basis element with the corresponding block of $\bold p$, we see that $\left\lbrace
(\bold I_{|\mathcal V|} \otimes \bold S)\bold p
\hspace{1pt}\big|\hspace{1pt}
\bold S \in  S_3\right\rbrace$ 
is composed of block vectors $\bold v$ that have the following property: 
for some $a,b,c \in \mathbb R$, 
\begin{equation}
    \bold v [i] = a
    \begin{bmatrix*}[r]
        \bold p[i]_y \\ -\bold p[i]_x \\  0\quad
    \end{bmatrix*} + b \begin{bmatrix*}[r]
        \bold p[i]_z \\  0\quad \\ -\bold p[i]_x 
    \end{bmatrix*} + c\begin{bmatrix*}[r]
         0\quad \\ -\bold p[i]_z \\ \bold p[i]_y
    \end{bmatrix*}\ \forall i\in\mathcal V
\end{equation}
Once again, it can be verified that $\bold R_D \bold v = \bold 0$.
A similar reasoning applies to the $d=2$ case as well.
Finally, the condition on the rank follows from the dimension of the null space in either case, which is $\binom{d+1}{2}$.
\end{proof}
\begin{remark}
The intuition for Lemma \ref{lemma:distance_rigidity_properties} is that the isometries (i.e., rigid motions) of $\mathbb R^d$ preserve the distances between the sensor network agents. Of these isometries, there are $d$ translations and $\binom{d}{2}$ rotations, corresponding to the first and second summand of the left hand side of (\ref{eq:rD_null_space}), respectively, which add up to the $\binom{d+1}{2}$ dimensions of $\ker(\bold R_D)$. Furthermore, the skew-symmetric matrices $S_d$ make up the Lie algebra $\frak{s0}(d)$ corresponding to the infinitesimal rotations in $\mathbb R^d$, which explains their relevance to distance rigidity.
\label{rem:R_D_nullspace}
\end{remark}

The null space of $\bold R_D$ was characterized in \cite{trinh2016further} and \cite{decent_dist_rigidity_2015} for the cases of $d=2$ and $3$, respectively, whereas the characterization given in Lemma \ref{lemma:distance_rigidity_properties} is a generalization which  holds true in arbitrary dimensions, i.e., $d\geq 1$. An equivalent result about the bearing rigidity matrix, $\bold R_B$, is as follows.
\begin{lemma}[Null Space of $\bold R_B$\protect{\cite[Lemma 3]{Zhao_Zelazo_2016}}]
    For $d=2$ or $3$, we have
    \begin{equation}
    \left\lbrace \bold 1_{|\mathcal V|}\otimes \bold q 
\hspace{1pt}\big|\hspace{1pt}
\bold q \in \mathbb R^d \right\rbrace+\vspan\big(\{\bold p\}\big)\subseteq \ker\left(\bold R_B\right),
    \label{eq:rb_nullspace}
    \end{equation}
    and consequently, $\rank(\bold R_B)\leq d|\mathcal V| - d - 1$.
    \label{lemma:bearing_rigidity_properties}
\end{lemma}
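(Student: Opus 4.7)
My plan is to mirror the structure of the proof of Lemma \ref{lemma:distance_rigidity_properties}: establish the inclusion in \eqref{eq:rb_nullspace} by verifying that each of the two summands annihilates every row of $\bold R_B$, then argue that the two summands intersect trivially so that the dimension of $\ker(\bold R_B)$ is at least $d+1$, which yields the rank bound.

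First I would rewrite a generic row of $\bold R_B$ applied to a vector $\bold v\in\mathbb R^{d|\mathcal V|}$. Using the fact that the projection matrix $\bold P_{ij}(\bold p)$ is symmetric in $i,j$ (it projects onto the hyperplane orthogonal to the line spanned by $\bold p[i]-\bold p[j]$), the row associated with edge $(i,j)$ contributes
\[
\frac{1}{\|\bold p[i]-\bold p[j]\|_2}\,\bold P_{ij}(\bold p)\bigl(\bold v[i]-\bold v[j]\bigr).
\]
With this single identity in hand, both checks become immediate: for a translation $\bold v=\bold 1_{|\mathcal V|}\otimes\bold q$, one has $\bold v[i]-\bold v[j]=\bold 0$, and for $\bold v=c\bold p$ one has $\bold v[i]-\bold v[j]=c(\bold p[i]-\bold p[j])$, which lies in the kernel of $\bold P_{ij}(\bold p)$ by the very definition of that projection. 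Hence both summands on the left-hand side of \eqref{eq:rb_nullspace} are contained in $\ker(\bold R_B)$, and so is their sum.

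For the rank bound, I need the dimension of the sum to be $d+1$. The translation summand is clearly $d$-dimensional, while $\vspan(\{\bold p\})$ is $1$-dimensional as long as $\bold p\neq \bold 0$. To see that their intersection is trivial, suppose $\bold 1_{|\mathcal V|}\otimes\bold q = c\bold p$ for some $\bold q\in\mathbb R^d$ and $c\in\mathbb R$. If $c\neq 0$, then every block of $\bold p$ equals $\bold q/c$, contradicting the standing assumption that no two agents share a position (whenever $|\mathcal V|\geq 2$); if $c=0$, then $\bold q=\bold 0$. Thus the two subspaces sum directly, the kernel has dimension at least $d+1$, and the rank bound $\rank(\bold R_B)\leq d|\mathcal V|-d-1$ follows by the rank--nullity theorem.

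The only subtle point I anticipate is the bookkeeping of signs and symmetry in the row of $\bold R_B$: the second block of the $k$-th row, written in the excerpt as $\bold P_{ji}(\bold p)/\|\bold p[i]-\bold p[j]\|_2$, must be interpreted so that the action on $\bold v$ combines into $\bold P_{ij}(\bold v[i]-\bold v[j])$; once that is settled, the argument is a direct transcription of the geometric intuition (translations preserve all bearings, and a uniform dilation of the configuration about the origin preserves all unit vectors from one agent to another). No further rigidity-theoretic machinery is needed beyond the definition of $\bold P_{ij}$ and the distinct-position assumption.
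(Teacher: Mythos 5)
Your proposal is correct and follows the same route as the paper, which simply notes that one checks each summand of \eqref{eq:rb_nullspace} vanishes under multiplication with $\bold R_B$ (mirroring the proof of Lemma \ref{lemma:distance_rigidity_properties}); your row-wise identity $\bold P_{ij}(\bold v[i]-\bold v[j])/\|\bold p[i]-\bold p[j]\|_2$ is exactly the computation the paper leaves implicit, and your sign/symmetry caveat about $\bold P_{ji}$ is well taken. The direct-sum argument you add for the dimension count (using the distinct-positions assumption to get $\dim\ker(\bold R_B)\geq d+1$) is also correct and supplies the rank bound that the paper states without elaboration.
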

\begin{proof}
    The proof is similar to that of Lemma \ref{lemma:distance_rigidity_properties}; it
    involves checking that the vectors in each summand of (\ref{eq:rb_nullspace}) vanish under multiplication with $\bold R_B$.
\end{proof}
\begin{remark}
    The second summand in (\ref{eq:rb_nullspace}) corresponds to scaling (expansion and shrinking) of $\mathbb R^d$, which preserves the orientations of the vectors $\bold p[i] - \bold p[j]$, $\forall i,j\in \mathcal V$, and therefore preserves the inter-agent bearing measurements.
    \label{rem:R_B_nullspace}
\end{remark}

Thus, given a sensor network with $|\mathcal V|$ agents, $\bold R$ has a maximal rank (across all possible values of $\mathcal E$ and $\bold p$) that is strictly less than the number of columns in $\bold R$. A sensor network configuration is said to be \textit{infinitesimally rigid} in dimension $d$ when the corresponding rigidity matrix has maximal rank, in which case, the symbol `$\subseteq$' in (\ref{eq:rD_null_space}) and (\ref{eq:rb_nullspace}) become equalities \cite{decent_dist_rigidity_2015,zhao2019bearing}.
There is extensive literature discussing how infinitesimal rigidity can be maintained in a distributed manner \cite{Olfati-Saber_Murray_2002,decent_dist_rigidity_2015,bearing_rigidity_maintenance_2017}. It can also be guaranteed with high probability in random sensor network configurations (modeled as random geometric graphs) based on the spatial density and sensing radius of the agents \cite{Eren_rigidity_randomness_2004}. 

We now state the condition for $l_2/l_0$ recoverability (i.e., the ability to recover the localization error vector $\bold x$ uniquely by solving the $l_2/l_0$ minimization problem P1) for infinitesimally rigid sensor network configurations.
\begin{theorem}
Suppose $\bold R$ has maximal rank. Then, a solution $\bold x$ of the $l_2/l_0$ minimization problem P1 is the unique solution to P1 if 
\begin{equation}
\|\bold x\|_{2,0}<\frac{1}{2}\left(|\mathcal V|-\tilde s\right), 
\end{equation}
where $\tilde s$ is determined as follows:
\begin{enumerate}
    \item if $\bold R=\bold R_D$ and $d=2$, then $\tilde s=1$
    \item if $\bold R=\bold R_D$ and $d=3$, then $\tilde s$ is equal to the maximum number of colinear points in $\{\bold p[i]\}_{i\in \mathcal V}$
    \item if $\bold R=\bold R_B$ and $d=2$ or $3$, then $\tilde s=1$
\end{enumerate}
\label{theorem:l0_recoverability}
\end{theorem}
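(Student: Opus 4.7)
The plan is to invoke Lemma \ref{lemma:l0_recoverability} and reduce the theorem to computing the block spark $\sigma(\bold R) \coloneqq \min\{\|\bold v\|_{2,0} \mid \bold v \in \ker(\bold R), \bold v \neq \bold 0\}$ of the rigidity matrix under the maximal rank hypothesis. The desired bound $\|\bold x\|_{2,0} < \frac{1}{2}(|\mathcal V|-\tilde s)$ will follow once we establish that $\sigma(\bold R) = |\mathcal V| - \tilde s$ in each of the three cases. Since $\bold R$ has maximal rank, the inclusions in Lemmas \ref{lemma:distance_rigidity_properties} and \ref{lemma:bearing_rigidity_properties} become equalities, so every nonzero $\bold v \in \ker(\bold R)$ admits an explicit parameterization, and the question becomes: what is the maximum number of blocks $\bold v[i]$ that can simultaneously vanish?

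For Case 3 (bearing rigidity, $d = 2,3$), I would write a generic $\bold v \in \ker(\bold R_B)$ as $\bold v[i] = \bold q + \alpha \bold p[i]$ for some $\bold q \in \mathbb R^d$ and $\alpha \in \mathbb R$, not both zero. If $\alpha = 0$ then $\bold q \neq \bold 0$ and no block vanishes; if $\alpha \neq 0$ then $\bold v[i] = \bold 0$ forces $\bold p[i] = -\bold q/\alpha$, a single point, so by the distinctness of agent positions at most one block vanishes. Choosing $-\bold q/\alpha$ to equal any particular $\bold p[i]$ achieves this, giving $\sigma(\bold R_B) = |\mathcal V| - 1$. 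Case 1 ($\bold R_D$, $d=2$) runs along the same lines: writing $\bold v[i] = \bold q + \beta \bold S \bold p[i]$ with the unique (up to scale) $2\times 2$ skew-symmetric $\bold S$, the vanishing condition pins $\bold p[i]$ to a single point, so once again at most one block can vanish and $\sigma(\bold R_D) = |\mathcal V| - 1$.

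The main obstacle, and the step that actually drives the appearance of colinearity, is Case 2 ($\bold R_D$, $d=3$). Here I would identify the skew-symmetric matrix $\bold S$ with its axial vector $\boldsymbol \omega$ via $\bold S \bold p[i] = \boldsymbol \omega \times \bold p[i]$, so that a block $\bold v[i] = \bold q + \boldsymbol \omega \times \bold p[i]$ vanishes iff $\boldsymbol \omega \times \bold p[i] = -\bold q$. For a nontrivial null space element we may assume $\boldsymbol \omega \neq \bold 0$ (the case $\boldsymbol \omega = \bold 0, \bold q \neq \bold 0$ produces no zero blocks, as in Case 3). Fixing any particular $\bold p[i_0]$ satisfying this equation, the full solution set in $\bold p[i]$ is the affine line $\bold p[i_0] + \mathbb R \boldsymbol \omega$, since the kernel of $\boldsymbol \omega \times (\cdot)$ is exactly $\mathbb R \boldsymbol \omega$. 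Hence the indices at which $\bold v$ vanishes correspond to a set of colinear agent positions, so at most $\tilde s$ blocks can vanish. To see this is achieved, I would take a colinear subset of maximal size $\tilde s$ lying on a line with direction $\boldsymbol \omega$ through some point $\bold r$, and set $\bold q = -\boldsymbol \omega \times \bold r$; the resulting null space vector has exactly those $\tilde s$ zero blocks, so $\sigma(\bold R_D) = |\mathcal V| - \tilde s$.

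Combining the three cases with Lemma \ref{lemma:l0_recoverability} yields the stated sufficient condition. A minor bookkeeping point worth double-checking is that the argument for Case 2 uses the distinctness of agent positions only through the identification of colinear points with integer count $\tilde s$, and that the construction in the achievability half produces a genuinely nonzero element of $\ker(\bold R_D)$ (which it does, since $\boldsymbol \omega \neq \bold 0$ guarantees $\bold v \neq \bold 0$ even when many blocks vanish).
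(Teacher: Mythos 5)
Your proposal is correct and follows essentially the same route as the paper's (sketched) proof: reduce to the block spark via Lemma~\ref{lemma:l0_recoverability}, use the maximal-rank hypothesis to turn the inclusions of Lemmas~\ref{lemma:distance_rigidity_properties} and~\ref{lemma:bearing_rigidity_properties} into equalities, and count the maximum number of vanishing blocks, which is exactly the paper's ``fixed points of the transformation'' argument made explicit (one point for $2$D rotation and scaling, the axis line for $3$D rotation). Your write-up simply supplies the algebraic details the paper leaves out, including the achievability constructions; the only nitpick is that nonvanishing of the constructed kernel vector in Case~2 follows from $\tilde s<|\mathcal V|$ (guaranteed by maximal rank) rather than from $\boldsymbol\omega\neq\bold 0$ alone.
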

\begin{proof}
We provide a sketch of the proof.
First, we need to show that
\begin{equation}
     \hspace{3pt}{
            \min \left\lbrace \|\bold v\|_{2,0} \hspace{1pt}\big\vert\hspace{1pt}
                 \bold v \in \ker(\bold R), \bold v\neq \bold0
                 \right\rbrace
        } = |\mathcal V| - \tilde s \hspace{3pt}
\end{equation}
which can be verified on a case-by-case basis using Lemmas \ref{lemma:distance_rigidity_properties} and \ref{lemma:bearing_rigidity_properties}. Thereafter, Lemma \ref{lemma:l0_recoverability} establishes that $\bold x$ is the unique solution to P1 in each case.

While the proof can be completed algebraically,
the following is an intuitive explanation:
$\tilde s$ is equal to the number of fixed points of the corresponding transformations of $\mathbb R^d$ (see Remarks \ref{rem:R_D_nullspace} and \ref{rem:R_B_nullspace}), which is $1$ for $2$D rotations, the number of points lying on the axis of rotation for $3$D rotations, and $1$ for scaling; these correspond to cases $1$, $2$ and $3$ of the theorem, respectively.
\end{proof}


Recall that $|\mathcal D|$ is equal to $\|\bold x\|_{2,0}$, the block sparsity of $\bold x$.  Hence, in each of the cases considered in Theorem \ref{theorem:l0_recoverability},
roughly up to half of the agents in $\mathcal V$ can have localization errors before the $l_2/l_0$ minimization approach fails to recover the error vector $\bold x$ uniquely.
Note that the existing literature on rigidity theory assumes that the identities of the anchors (and consequently, the support of $\bold x$) is known \cite{zhao2019bearing}, whereas Theorem \ref{theorem:l0_recoverability} pertains to the more general case where the support of $\bold x$ may not be known.

\subsection{$l_2/l_q$ Recoverability}
We now turn our attention to the $l_2/l_q$ minimization P2, for $0<q\leq 1$, where $q=1$ is the most attractive approach due to its computational ease.
A necessary and sufficient condition for $l_2/l_q$ minimization to uniquely recover a block sparse solution is given via the $l_q$ block null space property (NSP) \cite{latushkin_null_2015}, which is defined as follows.
A matrix $\bold A\in\mathbb R^{m\times dn}$ is said to satisfy the $l_q$ block NSP of order $s$ if, for all subsets $\mathcal S\subseteq \{1, 2, \dots, n\}$ with $|\mathcal S|\leq s$, we have
\begin{equation}
\|\bold v_{\mathcal S}\|_{2,q} < \|\bold v_{\mathcal S^\complement}\|_{2,q},\quad \forall\hspace{1pt}\bold v\in \ker(\bold A)\backslash \mathcal \{\bold 0\}
\label{eq:block_nsp_condition}
\end{equation}
\begin{lemma}
 [$l_2/l_q$ Recoverability]
Let $\bold x$ be a solution to the $l_2/l_q$ minimization problem P2, where $0<q\leq 1$. Then, $\bold x$ is the unique solution to P2
if and only if $\bold R$ satisfies the $l_q$ block NSP of order $\|\bold x\|_{2,0}$.
\label{lemma:lq_recoverability}
\end{lemma}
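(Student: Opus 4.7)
The plan is to prove both directions of the biconditional using standard compressive-sensing-style arguments, adapted to the block-vector setting. The algebraic engine is that for $0 < q \leq 1$, the scalar map $t \mapsto t^q$ is subadditive on $[0,\infty)$, so that $\bold u \mapsto \|\bold u\|_{2,q}^q$ is subadditive on $\mathbb R^{d|\mathcal V|}$. This yields both a forward inequality $\|\bold u + \bold w\|_{2,q}^q \leq \|\bold u\|_{2,q}^q + \|\bold w\|_{2,q}^q$ and a reverse form $\|\bold u + \bold w\|_{2,q}^q \geq \|\bold u\|_{2,q}^q - \|\bold w\|_{2,q}^q$, both of which I will need.

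For sufficiency, suppose $\bold R$ satisfies the $l_q$ block NSP of order $s := \|\bold x\|_{2,0}$, and let $\mathcal S$ be the support of $\bold x$, so $|\mathcal S| = s$. Let $\hat{\bold x}$ be any feasible competitor with $\hat{\bold x} \neq \bold x$, and set $\bold v := \hat{\bold x} - \bold x \in \ker(\bold R) \setminus \{\bold 0\}$. Splitting $\|\hat{\bold x}\|_{2,q}^q$ across $\mathcal S$ and $\mathcal S^\complement$, using that $\bold x$ vanishes on $\mathcal S^\complement$, and applying the reverse inequality block-wise on $\mathcal S$, I expect to obtain
\[
\|\hat{\bold x}\|_{2,q}^q \;\geq\; \|\bold x\|_{2,q}^q \;+\; \|\bold v_{\mathcal S^\complement}\|_{2,q}^q \;-\; \|\bold v_{\mathcal S}\|_{2,q}^q.
\]
The block NSP applied to the nonzero $\bold v$ at the set $\mathcal S$ then yields $\|\bold v_{\mathcal S^\complement}\|_{2,q} > \|\bold v_{\mathcal S}\|_{2,q}$, and strict monotonicity of $t \mapsto t^q$ on $[0,\infty)$ lifts this to the $q$-th powers, so $\|\hat{\bold x}\|_{2,q} > \|\bold x\|_{2,q}$ and $\bold x$ is the unique P2 minimizer.

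For necessity, I argue by contrapositive. If the $l_q$ block NSP of order $s$ fails, there exist a subset $\mathcal S$ with $|\mathcal S| \leq s$ and a vector $\bold v \in \ker(\bold R) \setminus \{\bold 0\}$ for which $\|\bold v_{\mathcal S}\|_{2,q} \geq \|\bold v_{\mathcal S^\complement}\|_{2,q}$. From this I will construct an explicit P2 instance admitting two distinct minimizers of comparable cost: let $\bold x' := \bold v_{\mathcal S}$ (block sparsity at most $s$) and $\hat{\bold x}' := -\bold v_{\mathcal S^\complement}$. Because $\bold R \bold v = \bold 0$ gives $\bold R \bold v_{\mathcal S} = -\bold R \bold v_{\mathcal S^\complement}$, both vectors produce the same residual, making them both feasible for the instance $\bold z = \bold R \bold x'$, while $\|\hat{\bold x}'\|_{2,q} \leq \|\bold x'\|_{2,q}$ by construction. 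Hence $\bold x'$ is not uniquely recovered, contradicting the hypothesis read at the universal level of $s$-block-sparse signals of which $\bold x$ is one instance.

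The step I expect to need the most care is the reverse-triangle inequality in the quasi-norm regime $q < 1$: since $\|\cdot\|_{2,q}$ is not itself subadditive, the argument has to be executed at the level of $q$-th powers, with strictness carefully tracked when lifting the NSP hypothesis through $t \mapsto t^q$. A secondary subtlety is the framing of the necessity direction: the block NSP is a universal statement over subsets of size at most $s$, so its failure exhibits \emph{some} offending pair $(\mathcal S, \bold v)$ that need not be tied to the support of the specific $\bold x$ in the hypothesis — this is why I phrase the contrapositive as a statement about the existence of an $s$-block-sparse vector failing to be uniquely recovered, rather than about $\bold x$ itself.
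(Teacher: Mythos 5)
Your proof is correct, but it is a genuinely different route from the paper's: the paper's entire ``proof'' of this lemma is a two-line citation (the $q=1$ case to Stojnic--Parvaresh--Hassibi and the $0<q<1$ case to the block-NSP paper of Gao, Peng, and Yue), whereas you reconstruct the underlying argument from scratch. Your reconstruction is the standard one and it is sound: for sufficiency, the decomposition $\|\hat{\bold x}\|_{2,q}^q=\|\bold x_{\mathcal S}+\bold v_{\mathcal S}\|_{2,q}^q+\|\bold v_{\mathcal S^\complement}\|_{2,q}^q$ together with the reverse inequality at the level of $q$-th powers gives $\|\hat{\bold x}\|_{2,q}^q\geq\|\bold x\|_{2,q}^q+\|\bold v_{\mathcal S^\complement}\|_{2,q}^q-\|\bold v_{\mathcal S}\|_{2,q}^q>\|\bold x\|_{2,q}^q$, and your insistence on working with $\|\cdot\|_{2,q}^q$ rather than $\|\cdot\|_{2,q}$ is exactly what makes the $q<1$ quasi-norm case go through uniformly with $q=1$. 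For necessity, the pair $\bold x'=\bold v_{\mathcal S}$, $\hat{\bold x}'=-\bold v_{\mathcal S^\complement}$ with $\bold R\bold v_{\mathcal S}=-\bold R\bold v_{\mathcal S^\complement}$ is the standard witness construction, and the two vectors are distinct because they have disjoint supports and $\bold v\neq\bold 0$. What each approach buys: the paper's citation is economical and defers the burden of correctness to the references; your version is self-contained and, more importantly, makes explicit a point the paper glosses over --- the ``only if'' direction cannot hold for a single fixed $\bold x$ (uniqueness of one recovery does not force a universal kernel condition), so the lemma must be read as a statement about uniform recovery of all block $s$-sparse vectors, which is precisely how the cited theorems are stated and how you correctly frame your contrapositive. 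That interpretive care is the one place a blind proof of this lemma could have gone wrong, and you handled it.
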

\begin{proof}
The case for $q=1$ was shown in \cite[Thm. 2]{Stojnic_Parvaresh_Hassibi_2009} and the case for $0<q<1$ was shown in \cite[Thm. 26]{latushkin_null_2015}.
\end{proof}
Using Lemma \ref{lemma:lq_recoverability}, we have the following result for the localization error identification problem in the case of $\bold R = \bold R_D$. 
\begin{theorem}
Suppose $\bold R_D$ has maximal rank and $d=2$. Then,
a solution $\bold x$ of the $l_2/l_q$ minimization problem P2, with $0<q\leq 1$ and $\bold R=\bold R_D$, is the unique solution to P2 if and only if
\begin{equation}
   \sum_{i\in\mathcal S}\| \bold p[i] - \bold c\|^q_{2} < \sum_{i\in\mathcal S^\complement}\| \bold p[i] - \bold c\|^q_{2}
    \label{eq:point_l_q_condition}
\end{equation}
for all $\bold c\in \mathbb R^2$ and for all subsets $\mathcal S\subseteq \{1, 2, \dots, |\mathcal V|\}$ with $|\mathcal S|\leq \|\bold x\|_{2,0}$.
\label{theorem:lq_recoverability}
\end{theorem}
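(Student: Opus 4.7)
The plan is to invoke Lemma \ref{lemma:lq_recoverability} and translate the $l_q$ block NSP for $\bold R_D$ into the stated geometric condition using the characterization of $\ker(\bold R_D)$. By Lemma \ref{lemma:lq_recoverability}, $\bold x$ is the unique solution to P2 if and only if $\bold R_D$ satisfies the $l_q$ block NSP of order $\|\bold x\|_{2,0}$, i.e., $\|\bold v_{\mathcal S}\|_{2,q} < \|\bold v_{\mathcal S^\complement}\|_{2,q}$ for every $\bold v\in\ker(\bold R_D)\setminus\{\bold 0\}$ and every $\mathcal S$ with $|\mathcal S|\leq \|\bold x\|_{2,0}$. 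Hence the proof reduces to showing that the family of NSP inequalities, as $\bold v$ ranges over $\ker(\bold R_D)\setminus\{\bold 0\}$, is equivalent to the family of geometric inequalities (\ref{eq:point_l_q_condition}) as $\bold c$ ranges over $\mathbb R^2$.

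Because $\bold R_D$ has maximal rank in $d=2$, the inclusion in Lemma \ref{lemma:distance_rigidity_properties} is an equality, so every null-space vector has blocks of the form $\bold v[i] = \bold q + aJ\bold p[i]$, with $J=\begin{bmatrix}0 & 1\\ -1 & 0\end{bmatrix}$, $\bold q\in\mathbb R^2$, $a\in\mathbb R$. For the rotation case $a\neq 0$, I would reparametrize $\bold v[i] = a J(\bold p[i] - \bold c)$ with $\bold c = -J^{\top}\bold q/a$. Since $J$ is orthogonal, $\|\bold v[i]\|_2 = |a|\,\|\bold p[i]-\bold c\|_2$, so that $\|\bold v_{\mathcal S}\|_{2,q}^{\,q} = |a|^{q}\sum_{i\in\mathcal S}\|\bold p[i]-\bold c\|_2^{q}$ and analogously for $\mathcal S^\complement$. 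Cancelling the positive factor $|a|^q$ shows the NSP inequality for this $\bold v$ is exactly (\ref{eq:point_l_q_condition}) at $\bold c$. Conversely, every $\bold c\in\mathbb R^2$ arises this way, establishing a bijection between the rotation directions of $\ker(\bold R_D)$ and the set of centers $\bold c$ appearing in the theorem.

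The remaining translation case $a=0$, $\bold q\neq \bold 0$ gives $\bold v[i]=\bold q$ for every $i$, so the NSP condition collapses to $|\mathcal S|<|\mathcal S^\complement|$. I would show that this is already implied by the theorem's condition through a limit argument as $\|\bold c\|_2\to\infty$: expanding $\|\bold p[i]-\bold c\|_2^{q} = \|\bold c\|_2^{q}\bigl(1 - q\,\bold p[i]\cdot\hat{\bold c}/\|\bold c\|_2 + O(\|\bold c\|_2^{-2})\bigr)$ and substituting into (\ref{eq:point_l_q_condition}) yields
\begin{equation*}
(|\mathcal S|-|\mathcal S^\complement|)\|\bold c\|_2^{q} - q\|\bold c\|_2^{q-1}\,\hat{\bold c}\cdot\bold w + O(\|\bold c\|_2^{q-2}) < 0,
\end{equation*}
where $\bold w = \sum_{i\in\mathcal S}\bold p[i]-\sum_{i\in\mathcal S^\complement}\bold p[i]$. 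The leading term forces $|\mathcal S|\leq |\mathcal S^\complement|$, and equality is ruled out by choosing $\hat{\bold c}$ aligned with $\bold w$ (or, if $\bold w=\bold 0$, analyzing the next-order term); so the translation case is subsumed. Combining the two cases gives NSP $\Leftrightarrow$ (\ref{eq:point_l_q_condition}), which with Lemma \ref{lemma:lq_recoverability} proves the theorem.

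The main obstacle I expect is the translation case: extracting the strict inequality $|\mathcal S|<|\mathcal S^\complement|$ from a ``for all finite $\bold c$'' hypothesis requires the careful asymptotic/directional argument sketched above, particularly when the centroid difference $\bold w$ happens to vanish. The rotation case, by contrast, is essentially a clean change of variables once the orthogonality of $J$ is used to collapse $\|J(\bold p[i]-\bold c)\|_2$ to $\|\bold p[i]-\bold c\|_2$.
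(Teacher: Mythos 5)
Your proposal follows the same route as the paper's proof: invoke Lemma \ref{lemma:lq_recoverability}, use the maximal-rank assumption to upgrade the inclusion in Lemma \ref{lemma:distance_rigidity_properties} to an equality, and convert the block NSP inequality into the geometric condition via the substitution $\bold c = -\tfrac{1}{a}\bold J^\top \bold q$ together with the orthogonality of the $2\times 2$ skew-symmetric generator. The rotation case ($a\neq 0$) is handled identically in both. Where you differ is that you explicitly treat the pure-translation null vectors ($a=0$), which the paper's proof silently discards when it divides by $k$; this case genuinely needs an argument, because the finite-$\bold c$ family (\ref{eq:point_l_q_condition}) only yields the translation inequality $|\mathcal S|<|\mathcal S^\complement|$ in the limit $\|\bold c\|_2\to\infty$, and limits of strict inequalities are a priori non-strict. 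On this point your write-up is more complete than the paper's.

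That said, your closing argument for the translation case has a loose end. The asymptotic expansion rules out $|\mathcal S|>|\mathcal S^\complement|$ at leading order, and rules out $|\mathcal S|=|\mathcal S^\complement|$ with $\bold w\neq\bold 0$ at the next order, but in the sub-case $|\mathcal S|=|\mathcal S^\complement|$ and $\bold w=\bold 0$ the next-order coefficient is a quadratic form in $\hat{\bold c}$ whose maximum over directions need not be positive, so the expansion may have to be pushed to arbitrarily high order and does not obviously terminate. There is a cleaner way to dispose of this sub-case that avoids asymptotics entirely: if $|\mathcal S|=|\mathcal S^\complement|$, then $|\mathcal S^\complement|=|\mathcal S|\leq \|\bold x\|_{2,0}$ as well, so the hypothesis (\ref{eq:point_l_q_condition}) applies to both $\mathcal S$ and $\mathcal S^\complement$ and, for any fixed $\bold c$, produces two mutually contradictory strict inequalities. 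Hence $|\mathcal S|=|\mathcal S^\complement|$ cannot occur under the theorem's hypothesis, and combined with your leading-order argument for $|\mathcal S|>|\mathcal S^\complement|$ the translation NSP follows. With that patch your proof is complete, and arguably more rigorous than the one in the paper.
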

\begin{proof} Using Lemma \ref{lemma:distance_rigidity_properties}, we see that any vector $\bold v\in\ker(\bold R_D)$ is a block vector having the form $\bold v[i] = \tilde{\bold c} + k\hspace{1pt}\bold S\hspace{1pt}\bold p [i]$,
where 
\[\bold S=\begin{bmatrix*}[r]
    0 & 1 \\ -1 & 0
\end{bmatrix*},
\]
$\tilde {\bold c} \in \mathbb R^2$, and $k\in\mathbb R$. Note that $\tilde {\bold c}$ and $k$ have the same value in each block of $\bold v$.
Using these facts in (\ref{eq:block_nsp_condition}) gives us the condition
\begin{equation}
    \left(\sum_{i\in \mathcal S}\|\tilde {\bold c} + k\hspace{1pt}\bold S\hspace{1pt}\bold p[i] \|^q_2\right)^{1/q} < \left(\sum_{i\in \mathcal S^\complement}\|\tilde {\bold c}+k\hspace{1pt}\bold S\hspace{1pt}\bold p[i]\|^q_2\right)^{1/q}
    \label{eq:thm_lq_before_c}
\end{equation}
Noting that $\bold S$ is an orthogonal matrix and dividing both sides by $k$, we have
\begin{equation}
   \left(\sum_{i\in \mathcal S}\|\tfrac{1}{k}\bold S^\top\tilde {\bold c} + \bold p[i] \|^q_2\right)^{1/q} < \left(\sum_{i\in \mathcal S^\complement}\|\tfrac{1}{k}\bold S^\top\tilde {\bold c} + \bold p[i]\|^q_2\right)^{1/q}
   \label{eq:thm_lq_after_c}
\end{equation}
Finally, we set $\bold c = - \tfrac{1}{k}\bold S^\top\tilde {\bold c}$ and use the operation $({}\cdot{})^q$ on both sides
to obtain (\ref{eq:point_l_q_condition}).
\end{proof}

To see the relevance of Theorem \ref{theorem:lq_recoverability}, consider $q=1$. Equation (\ref{eq:point_l_q_condition}) can be rewritten as
\begin{equation}
   \sum_{i\in\mathcal S}\| \bold p[i] - \bold c\|_{2} < \frac{1}{2}\sum_{i\in\mathcal V}\| \bold p[i] - \bold c\|_{2}
    \label{eq:point_l_1_condition}
\end{equation}
where for an arbitrary point $\bold c \in \mathbb R^2$, the left hand side is maximized by the set $\mathcal S$ corresponding to the positions of the $|\mathcal S|$ agents that are furthest from $\bold c$.
Thus, inequality (\ref{eq:point_l_1_condition}) is saying that the sum of distances of the $|\mathcal S|$ furthest agents from $\bold c$ accounts for less than half of the total sum of distances from $\bold c$, 
i.e., the agents in $\mathcal S^\complement$ are not clustered together near $\bold c$. 

As inequality (\ref{eq:point_l_1_condition}) must hold for all $\mathcal S\subseteq \mathcal V$ such that $|\mathcal S|\leq \|\bold x\|_{2,0}$, there is a maximal value for $\|\bold x\|_{2,0}$ such that inequality (\ref{eq:point_l_1_condition}) is satisfied by a given sensor network configuration, which is the maximum number of localization errors that can be uniquely recovered through $l_2/l_1$ minimization. We can generally expect this number to increase with the size of the network, since as the size of the sensor network increases, so does the right hand side of (\ref{eq:point_l_1_condition}). This indicates that up to a fraction of the agents in $\mathcal V$ can have localization errors before $l_2/l_q$ recovery fails to recover $\bold x$. A similar result can be shown for the case of $d=3$.
\begin{theorem}
The main claim of Theorem \ref{theorem:lq_recoverability} holds for $3$-dimensional sensor networks (i.e., $d=3$) if and only if inequality (\ref{eq:point_l_q_condition}) is satisfied by each of the $2$D configurations obtained by projecting $\{\bold p[i]\}_{i\in \mathcal V}$ onto an arbitrary plane in $\mathbb R^3$.
\label{theorem:3d_lq_recovery}
\end{theorem}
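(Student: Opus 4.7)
The plan is to adapt the proof of Theorem~\ref{theorem:lq_recoverability} to the three-dimensional setting, invoking Lemma~\ref{lemma:distance_rigidity_properties} with $d=3$ to parameterize $\ker(\bold R_D)$ and then substituting into the block NSP characterization of Lemma~\ref{lemma:lq_recoverability}. By Lemma~\ref{lemma:distance_rigidity_properties}, every $\bold v \in \ker(\bold R_D) \setminus \{\bold 0\}$ can be written as $\bold v[i] = \tilde{\bold c} + \bold a \times \bold p[i]$ for some $(\tilde{\bold c}, \bold a) \in \mathbb R^3 \times \mathbb R^3$, corresponding to an infinitesimal rigid motion of $\mathbb R^3$. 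When $\bold a \neq \bold 0$, splitting $\tilde{\bold c}$ into components parallel and perpendicular to $\bold a$ and absorbing the perpendicular part into the rotation yields $\bold v[i] = \tilde{\bold c}_\parallel + \bold a \times (\bold p[i] - \bold c^\star)$, where $\bold c^\star$ lies in the plane $\Pi$ perpendicular to $\bold a$. Orthogonality of the two summands then gives $\|\bold v[i]\|_2^2 = h^2 + \|\bold a\|_2^2\, r_i^2$, with $h = \|\tilde{\bold c}_\parallel\|_2$ the pitch of the screw motion and $r_i = \|\pi_\Pi(\bold p[i]) - \bold c^\star\|_2$ the projected distance to its axis.

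For necessity, fix any plane $\Pi$ with unit normal $\bold n$ and any $\bold c \in \Pi$; the choice $\bold a = \bold n$, $\tilde{\bold c} = \bold c \times \bold n$ (so that $\tilde{\bold c}_\parallel = \bold 0$ and $\bold c^\star = \bold c$) produces $\bold v[i] = \bold n \times (\bold p[i] - \bold c)$, and hence $\|\bold v[i]\|_2 = \|\pi_\Pi(\bold p[i]) - \bold c\|_2$. Substituting into the block NSP reproduces inequality~(\ref{eq:point_l_q_condition}) applied to the projected configuration at the center $\bold c$; varying $\Pi$ and $\bold c$ recovers the condition stated in the theorem. The pure-translation sub-case $\bold a = \bold 0$ contributes only the cardinality bound $|\mathcal S| < |\mathcal S^\complement|$, which is already implied by~(\ref{eq:point_l_q_condition}) via the limit $\|\bold c\|_2 \to \infty$ on any fixed projection, since then $\|\pi_\Pi(\bold p[i]) - \bold c\|_2 \to \|\bold c\|_2$ uniformly in $i$.

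The harder direction is sufficiency, whose non-trivial sub-case is a generic screw motion with $h > 0$ and $\bold a \neq \bold 0$. After normalizing $\|\bold a\|_2 = 1$, the inequality to deduce is
\[
\sum_{i\in\mathcal S} (h^2 + r_i^2)^{q/2} < \sum_{i\in\mathcal S^\complement} (h^2 + r_i^2)^{q/2},
\]
where the $r_i$'s are the distances to $\bold c^\star$ in the projected plane $\Pi$. My plan is to regard the difference of the two sides as a function of $h$, which is strictly positive at $h = 0$ (by the 2D NSP at the center $\bold c^\star$) and grows like $(|\mathcal S^\complement| - |\mathcal S|)\, h^q > 0$ as $h \to \infty$ (by the cardinality bound). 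The main obstacle is excluding an intermediate zero: I expect to handle it by exploiting the full family of 2D inequalities at \emph{every} center in $\Pi$ rather than only at $\bold c^\star$, combined with the concavity of $t \mapsto t^{q/2}$ for $q \in (0,1]$, so that the screw-motion inequality can be recast as an envelope of 2D inequalities evaluated at a suitable family of shifted centers.
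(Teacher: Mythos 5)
Your parameterization of $\ker(\bold R_D)$ and your necessity argument coincide in substance with the paper's proof: the paper factors the skew-symmetric part as $\bold S = l\,\bold Q\bold \Lambda\bold Q^\top$ and you write it as $\bold a\times(\,\cdot\,)$, which is the same decomposition, and in both cases the kernel blocks evaluate to distances within a configuration projected onto the plane orthogonal to the rotation axis. You are in fact more explicit than the paper on one point: you isolate the component of the translation $\tilde{\bold c}$ parallel to the axis, which contributes the pitch term $h$ and turns the block NSP condition into the screw-motion inequality $\sum_{i\in\mathcal S}(h^2+r_i^2)^{q/2}<\sum_{i\in\mathcal S^\complement}(h^2+r_i^2)^{q/2}$, whereas condition (\ref{eq:point_l_q_condition}) applied to the projected configuration only covers $h=0$. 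The paper's proof arrives at a center $\bold c=-\frac{1}{kl}\bold Q^\top\tilde{\bold c}\in\mathbb R^3$ and then identifies the resulting condition with the planar one without addressing the out-of-plane component of $\bold c$; you have correctly identified this as the crux of the sufficiency direction.

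The gap is that you do not actually prove sufficiency: your final paragraph is a plan (``I expect to handle it by\ldots''), and the proposed mechanism --- an envelope of planar inequalities at shifted centers combined with concavity of $t\mapsto t^{q/2}$ --- does not obviously produce the quantity $(h^2+r_i^2)^{q/2}$, since moving the center within the plane perturbs each $r_i$ individually rather than adding a common offset under the root. The step can, however, be closed. It suffices to verify the screw inequality for the worst-case support $\mathcal S^*$ consisting of the $s$ indices with the largest $r_i$, since any other $\mathcal S$ with $|\mathcal S|\leq s$ has a smaller left side and a larger right side. Writing $u=h^2$ and $D(u)=\sum_{i\in(\mathcal S^*)^\complement}(r_i^2+u)^{q/2}-\sum_{i\in\mathcal S^*}(r_i^2+u)^{q/2}$, one has
\begin{equation*}
D'(u)=\frac{q}{2}\Big(\sum_{i\in(\mathcal S^*)^\complement}(r_i^2+u)^{\frac q2-1}-\sum_{i\in\mathcal S^*}(r_i^2+u)^{\frac q2-1}\Big)>0,
\end{equation*}
because the exponent $\frac q2-1$ is negative, so every summand over $\mathcal S^*$ is bounded above by every summand over $(\mathcal S^*)^\complement$, and $|\mathcal S^*|=s<|\mathcal V|-s$ by the cardinality bound you already extracted from (\ref{eq:point_l_q_condition}) at distant centers. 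Hence $D$ is increasing in $h^2$, and $D(0)>0$ is exactly the planar hypothesis at the projected center, giving $D(u)>0$ for all $h$. With this monotonicity lemma inserted your argument is complete; without it, the equivalence asserted in the theorem (and silently assumed in the paper's own proof) is not established.
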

\begin{proof}
For $d=3$, a null vector $\bold v \in \ker(\bold R_D)$ has the form $\bold v[i] = \tilde{\bold c} + k\hspace{1pt} {\bold S}\hspace{1pt}\bold p [i]$, where $k\in \mathbb R$ and $ {\bold S}$ is a $3\times 3$ skew-symmetric matrix. We use the result from 
\cite[Sec. 7.3]{hogben2013handbook}
that any $3\times 3$ skew symmetric matrix $\bold S$ can be expressed as \mbox{$\bold S = l\hspace{1pt}\bold Q \bold \Lambda\bold Q^\top$},
where
\begin{equation}
  \bold \Lambda =  \begin{bmatrix*}[r]
0 & 1 & 0\\
-1 & 0 & 0\\
0 & 0 & 0
    \end{bmatrix*},
\end{equation}
$l\in \mathbb R$, and $\bold Q$ is an orthogonal matrix. Thus, $\bold v$ has the form $\bold v[i] = \tilde{\bold c} + kl\hspace{1pt} {\bold Q}{\bold \Lambda}{\bold Q}^\top\bold p [i]$.
Let $\tilde {\bold p}[i]=\bold Q^\top\bold p[i]$, such that $\tilde {\bold p}[i]$  denotes the agents' positions in a rotated coordinate frame. Following again the steps between (\ref{eq:thm_lq_before_c}) and (\ref{eq:thm_lq_after_c}), we have the condition
\begin{equation}
       \sum_{i\in \mathcal S}\|  \bold \Lambda \tilde{ \bold p}[i] - \bold c \|^q_2 < \sum_{i\in \mathcal S^\complement}\| \bold \Lambda \tilde{ \bold p}[i] - \bold c\|^q_2
\end{equation}
where $\bold c = - \frac{1}{kl}\bold Q^\top \tilde {\bold c}$. Observe that $\bold \Lambda$ projects (then rotates) the points onto a plane, as $
\bold \Lambda \tilde {\bold p}[i] = \left[\tilde{\bold p}[i]_y\ -\tilde{\bold p}[i]_x\  0\right]^\top$. To complete the argument, observe that the plane being projected onto is determined by $\bold Q$, which can be chosen arbitrarily.
\end{proof}

In the case of bearing-based localization error recovery, the conditions for $d=2$ and $3$ are similar to that of Theorem \ref{theorem:lq_recoverability}. This is related to the fact that the dimension of the bearing measurements (which are represented by unit vectors) equals the dimension of the space, $d$, as opposed to distance measurements which are scalar-valued irrespective of $d$.
\begin{proposition}
Suppose $\bold R_B$ has maximal rank and $d=2$. Then,
a solution $\bold x$ of the $l_2/l_q$ minimization problem P2, with $0<q\leq 1$ and $\bold R=\bold R_B$, is the unique solution to P2 if and only if
\begin{equation}
   \sum_{i\in\mathcal S}\| \bold p[i] - \bold c\|^q_{2} < \sum_{i\in\mathcal S^\complement}\| \bold p[i] - \bold c\|^q_{2}
    \label{eq:point_l_q_condition_Bearing}
\end{equation}
for all $\bold c\in \mathbb R^d$ and for all subsets $\mathcal S\subseteq \{1, 2, \dots, |\mathcal V|\}$ with $|\mathcal S|\leq \|\bold x\|_{2,0}$.
\label{thm:l_q_recoverability_Bearing}
\end{proposition}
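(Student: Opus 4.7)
The plan is to follow the structure of the proof of Theorem~\ref{theorem:lq_recoverability} but substitute the simpler kernel description of $\bold R_B$ from Lemma~\ref{lemma:bearing_rigidity_properties} in place of the one for $\bold R_D$. First I would invoke Lemma~\ref{lemma:lq_recoverability}, which reduces the claim to showing that $\bold R_B$ satisfies the $l_q$ block NSP of order $\|\bold x\|_{2,0}$ if and only if (\ref{eq:point_l_q_condition_Bearing}) holds for every eligible $\mathcal S$ and every $\bold c\in \mathbb R^2$. Because $\bold R_B$ has maximal rank, the inclusion in Lemma~\ref{lemma:bearing_rigidity_properties} becomes an equality, so every $\bold v\in\ker(\bold R_B)$ can be written blockwise as $\bold v[i]=\tilde{\bold c}+k\,\bold p[i]$ for some $(\tilde{\bold c},k)\in\mathbb R^2\times\mathbb R$, with $\bold v\neq \bold 0$ iff $(\tilde{\bold c},k)\neq (\bold 0,0)$.

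Next, for null vectors with $k\neq 0$, I would substitute this parameterization into the block NSP inequality (\ref{eq:block_nsp_condition}), raise both sides to the $q$-th power, and divide through by $|k|^q$. Setting $\bold c := -\tilde{\bold c}/k$ then recovers (\ref{eq:point_l_q_condition_Bearing}) exactly. Since the map $(\tilde{\bold c},k)\mapsto -\tilde{\bold c}/k$ surjects onto $\mathbb R^2$ as $k$ ranges over $\mathbb R\setminus\{0\}$, quantifying over $\bold c\in\mathbb R^2$ in (\ref{eq:point_l_q_condition_Bearing}) is equivalent to quantifying over all null vectors with $k\neq 0$. This is the streamlined analogue of the change of variables at the end of Theorem~\ref{theorem:lq_recoverability}; no skew-symmetric matrix or orthogonality argument is needed here, since bearing measurements are preserved by scaling rather than by rotation.

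The remaining case $k=0$ with $\tilde{\bold c}\neq \bold 0$ collapses the NSP inequality to the cardinality condition $|\mathcal S|<|\mathcal S^\complement|$, which I would show is implicit in (\ref{eq:point_l_q_condition_Bearing}) by letting $\|\bold c\|_2\to\infty$: since $\|\bold p[i]-\bold c\|_2/\|\bold c\|_2\to 1$ uniformly in $i$, the leading-order behavior of (\ref{eq:point_l_q_condition_Bearing}) forces $|\mathcal S|<|\mathcal S^\complement|$, which in turn yields the required inequality for every purely translational null vector. Handling this boundary case is the main (though minor) obstacle, since it falls outside the change-of-variables argument used for $k\neq 0$; the kernel of $\bold R_B$ contains a genuine translation summand that has no direct counterpart in the rotational null space of $\bold R_D$ treated in Theorem~\ref{theorem:lq_recoverability}, so it has to be absorbed into the condition via a separate asymptotic argument.
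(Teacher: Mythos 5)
Your route is the paper's route: reduce the claim to the $l_q$ block NSP via Lemma~\ref{lemma:lq_recoverability}, use the maximal-rank hypothesis to upgrade the inclusion in Lemma~\ref{lemma:bearing_rigidity_properties} to an equality so that every null vector has the blockwise form $\bold v[i]=\tilde{\bold c}+k\,\bold p[i]$, and for $k\neq 0$ factor out $|k|$ and substitute $\bold c=-\tilde{\bold c}/k$. The paper's own proof is exactly this, stated in one line by deferral to Theorem~\ref{theorem:lq_recoverability}, and you are right that no skew-symmetric or orthogonal-matrix manipulation is needed for the scaling summand. You go further than the paper in explicitly isolating the purely translational null vectors ($k=0$, $\tilde{\bold c}\neq\bold 0$); the paper's argument (like that of Theorem~\ref{theorem:lq_recoverability}) divides by $k$ without comment, so flagging this case is a genuine improvement in care.

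However, your handling of that case has a real flaw: letting $\|\bold c\|_2\to\infty$ in the strict inequality (\ref{eq:point_l_q_condition_Bearing}) only yields the non-strict conclusion $|\mathcal S|\leq|\mathcal S^\complement|$, because a limit of strict inequalities need not be strict, whereas the translational null vectors require $|\mathcal S|<|\mathcal S^\complement|$ exactly. The equality case is not vacuous. For a single set $\mathcal S$ one can have $|\mathcal S|=|\mathcal S^\complement|$ while (\ref{eq:point_l_q_condition_Bearing}) holds for every finite $\bold c$: take $q=1$, let $\mathcal S^\complement$ be the vertices of an equilateral triangle centred at the origin and $\mathcal S$ three points clustered sufficiently close to that centre; then $\sum_{i\in\mathcal S^\complement}\|\bold p[i]-\bold c\|_2>\big\|\sum_{i\in\mathcal S^\complement}(\bold p[i]-\bold c)\big\|_2=3\|\bold c\|_2\approx\sum_{i\in\mathcal S}\|\bold p[i]-\bold c\|_2$ for all $\bold c$, with the strictness surviving the perturbation. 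So the asymptotic argument cannot, by itself, rule out $|\mathcal S|=|\mathcal S^\complement|$. The repair is to exploit the quantifier over subsets rather than asymptotics: if some admissible $\mathcal S$ (i.e., $|\mathcal S|\leq\|\bold x\|_{2,0}$) had $|\mathcal S|\geq|\mathcal S^\complement|$, then $\mathcal S^\complement$ would also be admissible, and applying (\ref{eq:point_l_q_condition_Bearing}) to both $\mathcal S$ and $\mathcal S^\complement$ at any fixed $\bold c$ produces two contradictory strict inequalities. Hence every admissible $\mathcal S$ satisfies $|\mathcal S|<|\mathcal S^\complement|$, which is precisely the NSP condition generated by the translational null vectors. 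With that substitution your proof is complete.
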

\begin{proof}
Observe that a vector $\bold v \in \ker (\bold R_B)$ has the form $\bold v[i] = \tilde {\bold c} + k \hspace{1pt}\bold p [i]$, where $\tilde {\bold c} \in \mathbb R^d$ and $k \in \mathbb R$.
Thereafter, the proof follows similarly to those of Theorems \ref{theorem:lq_recoverability} and \ref{theorem:3d_lq_recovery}.
\end{proof}

Thus, following a reasoning similar to the one we used for distance-based localization error recovery, we observe that $l_2/l_1$ minimization can be used to recover the localization errors by processing the inter-agent bearing measurements as well, as long as the number of localization errors is not too large (i.e., beyond the maximal value of $|\mathcal S|$ for which inequality (\ref{eq:point_l_q_condition_Bearing}) holds).
\section{Robustness Guarantees}%
\label{sec:robustness}
In this section, we discuss the robustness of the $l_2/l_q$ minimization method subject to practical considerations such as measurement noise and linearization error. 
\subsection{Measurement Noise}
Suppose $\bold z = \bold R\bold x+\bold e$, where the measurement noise vector $\bold e$ is bounded as $\|\bold e\|_2\leq \epsilon$ with $\epsilon > 0$, then we are interested in solving the following robust $l_2/l_q$ minimization problem:
\begin{align*}
\begin{array}{rclc}
   \textsc{P3:}  &  &
\begin{array}{rl}
\underset{\hat{\bold x}}{\textnormal{minimize}}\quad  
&\| \hat{\bold x}\|_{2,q}  \\
\textnormal{subject to} \quad
& \|\bold z - \bold R \hat{\bold x}\|_2\leq \epsilon
\end{array} & \qquad \qquad
\end{array}
\end{align*}
where $\epsilon$ can be interpreted as the slackness of the constraint, which is introduced to accommodate the measurement noise.
In the case of $q=1$, the problem P3 is related to the basis pursuit denoising problem \cite{gill2011crowd_bpdn}.
Denote the solution of P3 as $\bold x^*$ which (due to the measurement noise) may be different from the vector $ {\bold x}$ that generated the observed measurements.
Nevertheless, we have the following relationship between $\bold x$ and $\bold x^*$.
\begin{lemma}[Robust $l_2/l_q$ Recoverability \protect{\cite[Cor. 3]{robust_NSP_2017}}]
Let $s>0$.
Suppose for all sets $\mathcal S\subseteq \{1, 2, \dots, n\}$ with $|\mathcal S|\leq s$, there exist constants $0<\tau < 1$ and $\gamma >0$ such that
\begin{equation}
\|\bold v_{\mathcal S}\|_{2,q} \leq \tau\|\bold v_{\mathcal S^\complement}\|_{2,q} + \gamma \| \bold R \bold v\|_2
\label{eq:robust_lq_cond}
\end{equation}
for all $\bold v\in \mathbb R^{d|\mathcal V|}$. Then, a solution $\bold x^*$ of Problem P3 approximates $ {\bold x}$ with the error
\begin{equation}
    \| \bold x - \bold x^*\|_{2,q} \leq C_{q, \tau}\sigma _{q,s}( \bold x) + C_{q,\tau,\gamma}\epsilon,
    \label{eq:robust_lq_ineq}
\end{equation}
\begin{align}
&\textit{where} \nonumber \\
& C_{q, \tau}=2^{\sfrac{2}{q}-1}\left(\frac{1+\tau^q}{1-\tau^q}\right)^{\sfrac{1}{q}};\   C_{q,\tau,\gamma}=2^{\sfrac{2}{q}}\left(\frac{1}{1-\tau^q}\right)^{\sfrac{1}{q}}\gamma\ ;\nonumber\\
&
\textit{and}\quad       \sigma_{q,s}(\bold x) = \inf \left\lbrace  \|\bold x - \bold z\|_{2,q} 
\hspace{1pt}\big\vert\hspace{1pt}
\bold z \in \mathbb R^{d|\mathcal V|}, \|\bold z\|_{2,0}\leq s  \right\rbrace.
\nonumber
\end{align}
\label{lem:robust_lq}
\end{lemma}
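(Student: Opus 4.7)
This is a standard robust null space property (NSP) argument from compressive sensing, and my plan is to reconstruct it in the block-sparse setting. The target is to control $\bold h \coloneqq \bold x^* - \bold x$ in the $\|\cdot\|_{2,q}$ norm by combining (i) a bound on $\|\bold R\bold h\|_2$ coming from the feasibility of both $\bold x$ and $\bold x^*$, with (ii) a bound relating the ``off-support'' mass $\|\bold h_{\mathcal S^\complement}\|_{2,q}$ to the ``on-support'' mass $\|\bold h_{\mathcal S}\|_{2,q}$ coming from the $l_2/l_q$-optimality of $\bold x^*$. The robust NSP hypothesis (\ref{eq:robust_lq_cond}) then closes the loop by bounding $\|\bold h_{\mathcal S}\|_{2,q}$ back in terms of $\|\bold h_{\mathcal S^\complement}\|_{2,q}$ and $\|\bold R\bold h\|_2$.

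Concretely, I would first let $\mathcal S$ index the $s$ blocks of $\bold x$ with largest $l_2$ norms, so that $\|\bold x_{\mathcal S^\complement}\|_{2,q} = \sigma_{q,s}(\bold x)$. Since $\|\bold z - \bold R\bold x\|_2 = \|\bold e\|_2 \leq \epsilon$, the true vector $\bold x$ is itself feasible for problem \textsc{P3}, and the triangle inequality therefore gives $\|\bold R\bold h\|_2 \leq 2\epsilon$.

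Second, I would exploit the $q$-subadditivity of $\|\cdot\|_{2,q}^q$ (i.e., $\|\bold a + \bold b\|_{2,q}^q \leq \|\bold a\|_{2,q}^q + \|\bold b\|_{2,q}^q$ for $0 < q \leq 1$, which follows from $(\alpha+\beta)^q \leq \alpha^q + \beta^q$ for $\alpha,\beta \geq 0$). Raising the optimality inequality $\|\bold x^*\|_{2,q} \leq \|\bold x\|_{2,q}$ to the $q$-th power, splitting both sides over $\mathcal S$ and $\mathcal S^\complement$, and applying the reverse-triangle form of $q$-subadditivity on the left, I would obtain $\|\bold h_{\mathcal S^\complement}\|_{2,q}^q \leq \|\bold h_{\mathcal S}\|_{2,q}^q + 2\sigma_{q,s}(\bold x)^q$. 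The robust NSP (\ref{eq:robust_lq_cond}) applied to $\bold v = \bold h$ and raised to the $q$-th power then yields $\|\bold h_{\mathcal S}\|_{2,q}^q \leq \tau^q \|\bold h_{\mathcal S^\complement}\|_{2,q}^q + (2\gamma\epsilon)^q$. Substituting the first inequality into the second and using $\tau^q < 1$ to isolate $\|\bold h_{\mathcal S}\|_{2,q}^q$, then feeding the result back via $\|\bold h\|_{2,q}^q = \|\bold h_{\mathcal S}\|_{2,q}^q + \|\bold h_{\mathcal S^\complement}\|_{2,q}^q$, produces a $q$-th-power bound of the form $\|\bold h\|_{2,q}^q \leq A\,\sigma_{q,s}(\bold x)^q + B\,\epsilon^q$ for explicit constants $A, B$ depending on $q, \tau, \gamma$.

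The main obstacle (and the only delicate calculation) is the constant book-keeping at the final step required to recover exactly $C_{q,\tau}$ and $C_{q,\tau,\gamma}$ as stated. The exponents $2^{2/q-1}$ and $2^{2/q}$ emerge from the elementary inequality $(\alpha + \beta)^{1/q} \leq 2^{1/q-1}(\alpha^{1/q} + \beta^{1/q})$ (valid for $0 < q \leq 1$), which I would invoke once to take the $q$-th root of the bound $\|\bold h\|_{2,q}^q \leq A\,\sigma_{q,s}(\bold x)^q + B\,\epsilon^q$ and cleanly separate the approximation-error term $\sigma_{q,s}(\bold x)$ from the noise-level term $\epsilon$ in $\|\bold h\|_{2,q}$.
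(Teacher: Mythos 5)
Your proposal is correct, but note that the paper itself offers no proof of this lemma --- it is quoted directly from \cite[Cor.~3]{robust_NSP_2017} --- and your reconstruction is precisely the standard robust-NSP argument used in that reference: feasibility of both $\bold x$ and $\bold x^*$ gives $\|\bold R(\bold x^*-\bold x)\|_2\leq 2\epsilon$, the optimality of $\bold x^*$ gives the cone inequality $\|\bold h_{\mathcal S^\complement}\|_{2,q}^q\leq\|\bold h_{\mathcal S}\|_{2,q}^q+2\sigma_{q,s}(\bold x)^q$, and the robust NSP closes the loop. The constant bookkeeping you defer does work out: combining your two displayed inequalities yields $\|\bold h\|_{2,q}^q\leq \tfrac{2(1+\tau^q)}{1-\tau^q}\sigma_{q,s}(\bold x)^q+\tfrac{2}{1-\tau^q}(2\gamma\epsilon)^q$, and the final $2^{1/q-1}$-splitting recovers exactly $C_{q,\tau}$ and $C_{q,\tau,\gamma}$ as stated.
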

Using Lemma \ref{lem:robust_lq}, we can show that $l_2/l_1$ recoverability in the noise-free case (which was discussed in the previous section) also guarantees a certain degree of robustness in the presence of noise.
\begin{theorem}
Suppose that $\bold R$ satisfies the $l_1$ block NSP of order $s$. 
Then, there exist constants $0<\tau<1$ and $0<\gamma<\infty$ such that any solution $\bold x^*$ of the robust $l_2/l_q$ minimization problem P3, with $q=1$, is related to $\bold x$ via inequality (\ref{eq:robust_lq_ineq}).
\label{thm:l1_robust}
\end{theorem}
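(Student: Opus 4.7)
The plan is to show that the $l_1$ block NSP of order $s$ implies the robust NSP condition (\ref{eq:robust_lq_cond}) with $q=1$ for some explicit finite constants $\tau\in(0,1)$ and $\gamma>0$, from which the theorem follows by directly invoking Lemma \ref{lem:robust_lq}.

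First, I would upgrade the strict, pointwise NSP inequality into a uniform one on $\ker(\bold R)$. For each fixed subset $\mathcal S\subseteq\{1,\ldots,|\mathcal V|\}$ with $|\mathcal S|\leq s$, consider the map $\bold v\mapsto \|\bold v_{\mathcal S}\|_{2,1}/\|\bold v_{\mathcal S^\complement}\|_{2,1}$ on the compact set $K=\{\bold v\in\ker(\bold R):\|\bold v\|_2=1\}$. The NSP forbids any nonzero null vector from being supported on $\mathcal S$ (otherwise $\|\bold v_{\mathcal S}\|_{2,1}<0$, a contradiction), so the denominator stays strictly positive on $K$ and the map is continuous. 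Since the NSP forces the ratio to lie strictly below $1$ at every point of $K$, continuity and compactness imply that the supremum $\tau_{\mathcal S}$ is attained and satisfies $\tau_{\mathcal S}<1$. Taking $\tau \coloneqq \max_{|\mathcal S|\leq s}\tau_{\mathcal S}$ over the finitely many admissible subsets yields a uniform constant $\tau\in[0,1)$ with
\begin{equation}
\|\bold v_{\mathcal S}\|_{2,1}\leq \tau\|\bold v_{\mathcal S^\complement}\|_{2,1},\quad \forall\hspace{1pt}\bold v\in\ker(\bold R),\ |\mathcal S|\leq s.
\end{equation}

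Next, I would extend this to an arbitrary $\bold v\in\mathbb R^{d|\mathcal V|}$ via the standard orthogonal decomposition trick. Writing $\bold v = \bold v_N+\bold v_R$ with $\bold v_N\in\ker(\bold R)$ and $\bold v_R\in\ker(\bold R)^\perp$, the triangle inequality for $\|\cdot\|_{2,1}$ combined with the uniform bound above gives
\begin{equation}
\|\bold v_{\mathcal S}\|_{2,1}\leq \tau\|\bold v_{\mathcal S^\complement}\|_{2,1}+(1+\tau)\|\bold v_R\|_{2,1}.
\end{equation}
Because $\bold R$ restricted to the finite-dimensional subspace $\ker(\bold R)^\perp$ is injective, $\|\bold v_R\|_2\leq \sigma_{\min}^{-1}\|\bold R\bold v_R\|_2 = \sigma_{\min}^{-1}\|\bold R\bold v\|_2$, where $\sigma_{\min}$ denotes the smallest positive singular value of $\bold R$. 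Combining this with the norm equivalence $\|\bold v_R\|_{2,1}\leq \sqrt{|\mathcal V|}\,\|\bold v_R\|_2$ (a direct consequence of Cauchy-Schwarz applied blockwise) yields (\ref{eq:robust_lq_cond}) with $\gamma=(1+\tau)\sqrt{|\mathcal V|}/\sigma_{\min}<\infty$.

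With both $\tau<1$ and $\gamma<\infty$ in hand, Lemma \ref{lem:robust_lq} specialized to $q=1$ immediately delivers the inequality (\ref{eq:robust_lq_ineq}) relating $\bold x$ and any solution $\bold x^*$ of P3. The main obstacle is the first step: converting a strict but qualitative NSP into a uniform one with an explicit gap $1-\tau>0$. Compactness of the unit sphere in the finite-dimensional subspace $\ker(\bold R)$ is what makes this work cleanly, and the only subtle point is confirming that the denominator $\|\bold v_{\mathcal S^\complement}\|_{2,1}$ does not degenerate on that sphere, which the NSP itself guarantees via the short contradiction argument noted above.
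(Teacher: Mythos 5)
Your proposal is correct and follows essentially the same route as the paper's appendix proof: the same orthogonal decomposition of $\bold v$ into its components in $\ker(\bold R)$ and $\ker(\bold R)^\perp$, the block NSP applied to the kernel component, the smallest nonzero singular value of $\bold R$ to control the range component, and the blockwise Cauchy--Schwarz equivalence $\|\cdot\|_{2,1}\leq\sqrt{|\mathcal V|}\,\|\cdot\|_2$, yielding a $\gamma$ of the same form $(1+O(\tau))\sqrt{|\mathcal V|/\tilde\lambda(\bold R^\top\bold R)}$. Two small differences are worth noting: you make explicit the compactness argument that upgrades the strict pointwise NSP inequality to a uniform constant $\tau<1$ on the unit sphere of $\ker(\bold R)$ (including the check that $\|\bold v_{\mathcal S^\complement}\|_{2,1}$ cannot vanish there), a step the paper simply asserts; and you assemble the final bound with two direct applications of the triangle inequality, which avoids the paper's two-case analysis via the reverse triangle inequality while giving a marginally smaller constant. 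Both refinements are sound and, if anything, tighten the published argument.
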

\begin{proof}
The proof is given in the appendix.
\end{proof}
\noindent
A crucial observation from this proof is that the constant $\gamma$ in (\ref{eq:robust_lq_cond}) (which represents the sensitivity of the approximation error to measurement noise) is inversely proportional to the smallest non-zero eigenvalue of $\bold R^T \bold R$, which is denoted as $\tilde \lambda(\bold R^T\bold R)$. In the literature, $\tilde \lambda(\bold R^T\bold R)$ is called as the \textit{worst-case rigidity index} and is considered as a quantitative measure of the rigidity of the configuration \cite{zhu2009stiffness, trinh2016further, jordan2022rigidity}. By using distributed control algorithms to maximize the worst-case rigidity index of a sensor network configuration, the robustness of the  $l_2/l_1$ minimization approach to measurement noise can be enhanced \cite{decent_dist_rigidity_2015, bearing_rigidity_maintenance_2017}.

 \subsection{Imperfect Position Estimates}
In this subsection, we address the case where the agents in $\mathcal D^\complement$ have obtained imperfect estimates of their own positions due to the uncertainty in their dynamical and measurement models. In this case,
we have $\mathbf x [i]\neq{\bold 0}\ \forall i \in \mathcal V$, i.e., $\bold x$ is not a block-sparse vector. As each agent in $\mathcal V$ has a localization error, we consider the problem of identifying the \textit{significant localization errors} instead which are defined as follows. Given $\kappa > 0$, agent $i\in \mathcal V$ is said to have a significant localization error if
$\|\bold x[i]\|_2 > \kappa$. Let ${\mathcal D} \subseteq \mathcal V$ be the set of agents which have significant localization errors.

While the noise in the inter-agent measurements contributes to the second term on the right hand side of inequality (\ref{eq:robust_lq_ineq}), the error in the position estimates of the agents in $ {\mathcal D}^\complement$, $\bold x_{ {\mathcal D}^\complement}$, contributes to the term $\sigma_{q,s}(\bold x)$. Suppose the sensor network configuration satisfies the $l_1$ block NSP of order $s$, such that $s \geq | {\mathcal D}|$, then we have the following upper bound on $\sigma_{1,s}(\bold x)$:

\begin{equation}
\sigma_{1,s}(\bold x) \leq 
\big(|\mathcal V|-s\big) \kappa
\end{equation}
which follows from the observation that the infimum in the expression for $\sigma_{q,s}(\bold x)$ is attained only if $\bold z[i]=\bold x[i] \Leftrightarrow i\in  {\mathcal D}$. Thus, the approximation error increases linearly with respect to the number of agents in the sensor network.


Interestingly, the presence of randomness in the entries of $\bold p$ and $\bold R$ is beneficial for the theoretical analysis of rigidity;
a random set of positions, $\{\bold p[i]\}_{i\in\mathcal V}$, is said to be \textit{generic}, which means that there is no algebraic dependence between the positions of the agents. Generic configurations play an important role in rigidity theory as it is easier to ensure rigidity in generic configurations (see \cite{Eren_rigidity_randomness_2004,bearing_rigidity_maintenance_2017}).
Moreover, with probability one, no more than $2$ points in $\{\bold p[i]\}_{i\in\mathcal V}$ can be colinear in a random configuration. Thus, a stronger guarantee of $l_2/l_0$ recoverability can be established for random configurations using Theorem \ref{theorem:l0_recoverability}. This may be compared to the role of random matrices in compressive sensing theory, which are used to guarantee recoverability properties with probability one \cite{Eldar_Mishali_2009}.

\subsection{Linearization Error}

In practice, the linearization error introduced by using $\bold R(\hat {\bold p})$ in place of the nonlinear measurement model can undermine the performance of our approach. This problem can be addressed by using an iterative approach to solve the robust $l_2/l_q$ minimization problem P3, using the idea of \textit{bootstrapping}.
Suppose $\bold x_k$ is the estimate of $\bold x$ at a given iteration of the convex optimization routine, we can use $\bold R(\hat {\bold p} + \bold x_k)$ in place of $\bold R(\bold p)$ as a bootstrapped estimate of the rigidity matrix, based on the observation that $\hat {\bold p} + \bold x_k$ is closer to $\bold p$ than $\hat {\bold p}$ is.

With this motivation, we propose an algorithm based on sequential convex programming (SCP) (given in Algorithm \ref{alg:scp}) to iteratively reconstruct $\bold x$, following the methodology of \cite{zillober2004very}. In each iteration, the SCP algorithm updates its estimate of the rigidity matrix, and then
solves the corresponding robust $l_2/l_1$ minimization problem using a convex optimization solver. The stopping condition for the SCP algorithm, given in step $5$ of Algorithm \ref{alg:scp}, is based on the following proposition. It shows that in a sensor network where $\bold \Phi(\hat{\bold p}) \neq \bold \Phi({\bold p})$ (i.e., the inter-agent measurement residuals are non-zero) due to localization errors, Algorithm \ref{alg:scp} recovers a sparse vector $\bold x^*$ which serves as an \textit{explanation} for the observed measurement residuals.

\begin{proposition}
Consider the noise-free case (i.e., $\bold e=\bold 0$) of the localization error recovery problem. Suppose in Algorithm $\ref{alg:scp}$, we have $\|\tilde {\bold x}_k\|_{2} \rightarrow 0$ as $\epsilon_{k} \rightarrow 0$, then
\begin{equation}
    \lim_{\epsilon_k \rightarrow 0} \big(\bold \Phi (\hat {\bold p} + \bold x_k)\big) = \bold \Phi (\bold p)
    \label{eq:local_convergence}
\end{equation}
\end{proposition}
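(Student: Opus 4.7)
The plan is to reinterpret (\ref{eq:local_convergence}) in terms of the per-iteration residual $\bold z_k \coloneqq \bold y - \bold \Phi(\hat{\bold p} + \bold x_k)$, which---in the noise-free case where $\bold y = \bold\Phi(\bold p)$---equals $\bold \Phi(\bold p) - \bold \Phi(\hat{\bold p} + \bold x_k)$. Hence the claim is equivalent to $\|\bold z_k\|_2 \to 0$. The SCP subproblem at iteration $k$ bootstraps $\bold R(\hat{\bold p}+\bold x_k)$ and, following the \textsc{P3} formulation, enforces the feasibility constraint $\|\bold z_k - \bold R(\hat{\bold p}+\bold x_k)\tilde{\bold x}_k\|_2 \leq \epsilon_k$ on its minimizer $\tilde{\bold x}_k$.

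From this constraint I would apply the triangle inequality and the submultiplicativity of the operator norm to obtain
\[
\|\bold z_k\|_2 \;\leq\; \epsilon_k \;+\; \|\bold R(\hat{\bold p}+\bold x_k)\|_2 \,\|\tilde{\bold x}_k\|_2.
\]
The first summand vanishes by construction as $\epsilon_k \to 0$, and the factor $\|\tilde{\bold x}_k\|_2$ vanishes by the standing hypothesis. It then remains to show that the operator norm $\|\bold R(\hat{\bold p}+\bold x_k)\|_2$ stays bounded in $k$; this will follow from the continuity of $\bold R_D$ (respectively $\bold R_B$) as a function of the configuration, together with the standing assumption that distinct agents occupy distinct positions, which rules out the only scenario (coincident agents in $\bold R_B$) where the rigidity matrix could blow up.

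The hard part will be ensuring that the iterates $\bold x_k$ themselves stay in a bounded region, since a priori $\tilde{\bold x}_k \to 0$ only controls increments and a telescoping sum of vanishing increments can still diverge. I would close this gap by invoking the monotone non-increase of the SCP objective $\|\bold x_k\|_{2,1}$ across iterations---a standard property of the trust-region-style sequential convex scheme of \cite{zillober2004very} used here---which confines the iterates to the initial sublevel set of $\|\cdot\|_{2,1}$. That set is compact, so by continuity of the rigidity matrix on its image we obtain a uniform bound $\sup_k \|\bold R(\hat{\bold p}+\bold x_k)\|_2 < \infty$. Inserting this bound into the displayed inequality and sending $\epsilon_k \to 0$ drives $\|\bold z_k\|_2 \to 0$, which is exactly (\ref{eq:local_convergence}).
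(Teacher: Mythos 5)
Your overall route is essentially the paper's: both arguments use the feasibility of $\tilde{\bold x}_k$ in the step-4 subproblem to get $\bold z_k - \bold R_k\tilde{\bold x}_k \to \bold 0$, and then eliminate the term $\bold R_k\tilde{\bold x}_k$ by combining the hypothesis $\|\tilde{\bold x}_k\|_2\to 0$ with boundedness of $\bold R_k=\nabla\bold\Phi(\hat{\bold p}+\bold x_k)$. Your reduction of (\ref{eq:local_convergence}) to $\|\bold z_k\|_2\to 0$ and the triangle/operator-norm estimate are correct, and you rightly isolate the one nontrivial obstruction: vanishing increments alone do not keep the iterates $\bold x_k$, and hence $\bold R_k$, in a bounded set.

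The device you use to close that gap, however, does not work. The quantity $\|\bold x_k\|_{2,1}$ is \emph{not} monotonically non-increasing along Algorithm \ref{alg:scp}: since $\|\bold x_k\|_{2,1}$ is the minimum of $\|\bold x_{k-1}+\bold x\|_{2,1}$ over the feasible set $\{\bold x : \|\bold z_{k-1}-\bold R_{k-1}\bold x\|_2\le\epsilon_{k-1}\}$, it is bounded by $\|\bold x_{k-1}\|_{2,1}$ only when $\bold x=\bold 0$ is feasible, i.e., only when $\|\bold z_{k-1}\|_2\le\epsilon_{k-1}$ --- precisely the situation in which the algorithm has nothing left to do. The claim already fails at $k=1$, where $\|\bold x_0\|_{2,1}=0$ but $\|\bold x_1\|_{2,1}=\|\tilde{\bold x}_0\|_{2,1}>0$ whenever $\epsilon_0<\|\bold z_0\|_2$. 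Moreover, Algorithm \ref{alg:scp} has no merit function or step-acceptance test, so the monotonicity property of trust-region-style schemes you appeal to from \cite{zillober2004very} is not inherited here; your compact-sublevel-set argument therefore does not establish $\sup_k\|\bold R_k\|_2<\infty$. The paper closes this step differently: it reads the hypothesis $\|\tilde{\bold x}_k\|_2\to 0$ as implying that $\bold x_k$ converges to some $\bold x^*$, and then uses continuity of $\bold\Phi$ and $\nabla\bold\Phi$ at $\hat{\bold p}+\bold x^*$ to pass to the limit term by term (that inference from vanishing increments to convergence is itself generous, but it is the stated argument). To repair your proof you should either adopt that reading of the hypothesis or add an explicit assumption that the iterates remain in a compact set avoiding coincident positions; note also that for $\bold R_B$ the relevant configuration is $\hat{\bold p}+\bold x_k$ rather than $\bold p$, so the standing assumption that true agent positions are distinct does not by itself rule out a blow-up of the bearing rigidity matrix along the iterates.
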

\begin{proof}
Since $\|\tilde {\bold x}_k\|_{2} \rightarrow 0$, there exists some $\bold x^*\in \mathbb R^{d|\mathcal V|}$ such that $\lim_{\epsilon_k \rightarrow 0}({\bold x}_k) = \bold x^*$, giving us
\begin{equation}
\lim_{\epsilon_k \rightarrow 0} \left(\bold z_{k}\right) = \bold y- \bold \Phi(\hat{\bold p}+\bold x^*)
\label{eq:z_limit}
\end{equation}
Moreover, as $\tilde {\bold x}_k$ is a solution to the optimization problem in step $4$ of Algorithm \ref{alg:scp}, we have
\begin{equation}
    \lim_{\epsilon_k\rightarrow 0}\left( \bold z_{k} -  \bold R_{k} \tilde {\bold x}_k\right) = \bold 0 
    \label{eq:final_prop_limiteq}
    \end{equation}
Using the definition of $\bold z_k$ and substituting $\bold y$ with $\bold \Phi(\bold p)$, we can rewrite (\ref{eq:final_prop_limiteq}) as
    \begin{equation}
    \lim_{\epsilon_k \rightarrow 0} \Big( \bold \Phi (\bold p) - \bold \Phi (\hat {\bold p} + \bold x_k) - \nabla \bold \Phi (\hat {\bold p} + \bold x_k)\tilde {\bold x}_k \Big)= \bold 0
    \label{eq:prop_local_convergence_lasteqn}
\end{equation}
Since each of the summands has a limit (due to the continuity of $\bold \Phi$ and $\nabla\bold \Phi$), we can use the fact that $\tilde {\bold x}_k \rightarrow \bold 0$ in (\ref{eq:prop_local_convergence_lasteqn}) to obtain (\ref{eq:local_convergence}).
\end{proof}

Thus, Algorithm \ref{alg:scp} is able to uniquely recover both the set of incorrectly localized agents as well as their corrected positions. We use Algorithm \ref{alg:scp} to numerically demonstrate the proposed approach in the next section, showing that it recovers $\bold x$ accurately in as few as $4$ SCP iterations.

\begin{algorithm}
\caption{Sparse Localization Error Recovery using SCP}
\begin{algorithmic}[1]
\vspace{2pt}
\REQUIRE 
 $\hat{\bold p}$, $\bold y$, $N_{\text{Iterations}}\geq 1$, initial constraint slackness $\epsilon_0\geq 0$, slackness reduction factor $0<\rho\leq1$, and tolerance $\delta \geq 0$.\\
\vspace{3pt} 
\STATE $\bold x_0 \leftarrow \bold 0$
\FOR{$k=1,\dots,N_{\text{Iterations}}$}
\STATE Construct the residual vector and rigidity matrix: 
\[\bold z_{k-1} = \bold y - \bold \Phi (\hat {\bold p} + \bold x_{k-1})\] 
\[\bold R_{k-1} = \nabla \bold \Phi(\hat {\bold p} + \bold x_{k-1})\]
\STATE Update the estimated error vector using \[\bold x_k \leftarrow \bold x_{k-1} + \tilde {\bold x}_{k-1},\]
where $\tilde {\bold x}_{k-1}$ is the solution to the following convex optimization problem:
\begin{align*}
\begin{array}{rl}
\underset{\bold x}{\textnormal{minimise}}\quad  
&\|\bold x_{k-1} + \bold x\|_{2,1}  \\
\textnormal{subject to} \quad
& \|\bold z_{k-1} - \bold R_{k-1}\bold x\|_2 \leq \epsilon_{k-1}
\end{array}
\end{align*}
\IF {$\|\tilde {\bold x}_{k-1}\|_2<\delta$}
\STATE \textbf{break}
\ELSE 
\STATE Tighten the constraint: $\epsilon_k \leftarrow \rho\hspace{1pt}\epsilon_{k-1}$
\ENDIF
\ENDFOR
\STATE $\bold x^* \leftarrow \bold x_k$
\RETURN The recovered localization error vector, $\bold x^*$.
\end{algorithmic}
\label{alg:scp}
\end{algorithm}
\section{Numerical Example}
\label{sec:numerical}

In this section, we consider a numerical example of a sensor network consisting of $13$ uncrewed aerial vehicles (UAVs) positioned in a $3$-dimensional configuration. It is assumed that the UAVs are able to measure their distances from the others in their vicinity by using ultra wideband (UWB) ranging \cite{cao2021relative}. The SCP algorithm (given in Algorithm \ref{alg:scp}) is used to process the distance measurements, identify the UAVs having localization errors, as well as recover their true positions.

The details of the numerical example are as follows; the network of UAVs is represented by a graph $\mathcal G=(\mathcal V, \mathcal E)$ having $13$ vertices and $36$ edges, where the vertices represent the UAVs and the edges correspond to the availability of distance measurements between them. The sensor network's configuration is determined by the positions of its vertices in $\mathbb R^3$ (shown in Fig. \ref{fig:single_run}), which are collectively represented by the block vector $\bold p$. Each UAV uses a suite of onboard sensors such as inertial navigation systems (INS), visual odometry, and GNSS receivers to estimate its own position, such that the position estimates of the UAVs are collectively represented by the block vector, $\hat{\bold p}$. 
Due to GNSS multipath errors and/or spoofing of GNSS signals, a subset of the UAVs, $\mathcal D\subseteq \mathcal V$, have localization errors.

Letting $\bold R_D$ denote the distance rigidity matrix of the sensor network, we have that the $7^{th}$-smallest eigenvalue of $\bold R_D^\top \bold R_D$ is equal to $0.334$, whereas $6$ of its eigenvalues are smaller than $10^{-14}$ (i.e., are close to zero). Using Lemma \ref{lemma:distance_rigidity_properties}, we know that the rank of $\bold R_D$ is maximal, and thus, the sensor network is infinitesimally rigid.
Using Theorem \ref{theorem:3d_lq_recovery}, we compute that $l_2/l_1$ minimization can be used to uniquely recover the localization error vector, $\bold x=\bold p - \hat{\bold p}$, as long as the number of localization errors is no greater than $6$. In the presence of linearization error (which was not considered in Theorem \ref{theorem:3d_lq_recovery}), the number of localization errors which can be recovered with certainty turns out to be about $4$, as demonstrated in Section \ref{subsection:n_faulty}.

\subsection{Noise-Free Case}
\label{subsection:sim_noisefree}
We first consider the noise-free case, where the measurement noise vector $\bold e$ is set to $\bold 0$, and the agents without localization errors are assumed to have perfect estimates, i.e., $\hat {\bold p}_{\mathcal D^\complement} = \bold p_{\mathcal D^\complement}$. The set of localization errors, $\mathcal D$, is constructed by randomly selecting $4$ agents from $\mathcal V$. For each agent in $\mathcal D$, its localization error is chosen by generating independent samples from a random vector having the uniform distribution over the $3$-dimensional unit cube. 

Figure \ref{fig:single_run} shows the non-zero blocks of the vector $\bold x^*$ recovered by the SCP algorithm (Algorithm \ref{alg:scp}) after $4$ iterations, along with the blocks of $\bold p$ and $\hat{\bold p}$ (which represent the true and estimated positions of each of the agents, respectively), demonstrating that the proposed $l_2/l_1$ minimization approach is able to uniquely recover the localization error vector as expected. For the SCP algorithm, its parameters $\epsilon_0$ and $\rho$ were chosen as $4.0$ and $3.0$, respectively, based on an iterative parameter tuning process.
Figure \ref{fig:error_vs_scp_iterations} shows the relative approximation error, defined as $\|\bold x - \bold x^*\|_2 / \| \bold x \|_2$, as a function of the number of SCP iterations, showing that a larger number of SCP iterations can be used to improve the accuracy of the error recovery.

\begin{figure}[!t]
\centerline{\includegraphics[trim={0.7cm, 1.5cm, 0.8cm, 2.0cm},clip, width=0.9\columnwidth]{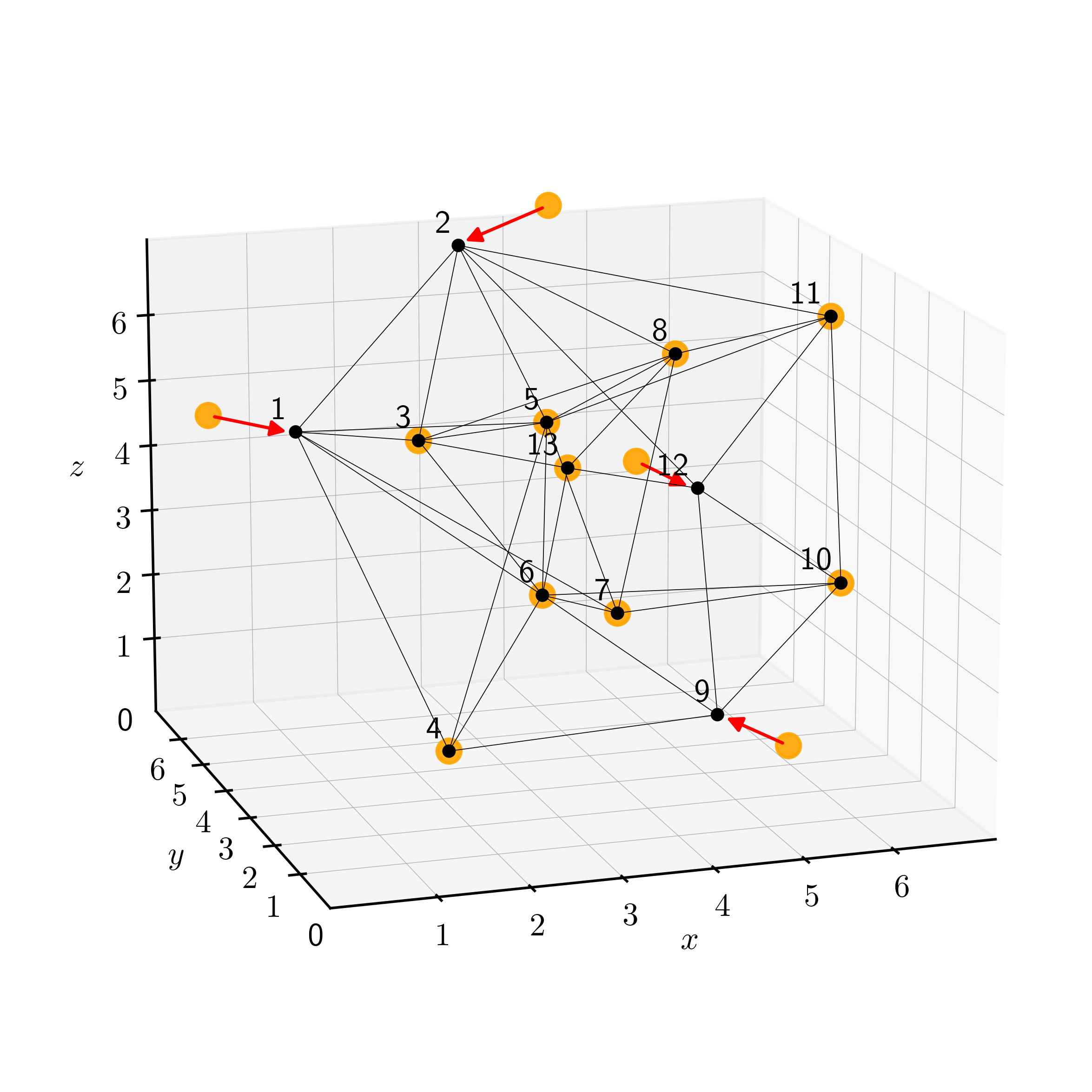}}
\caption{The true positions of the UAVs, $\mathbf p[i]$, are depicted by the black points and their estimated positions, $\hat {\mathbf p}[i]$, are depicted by the yellow discs. The localization error vector $\bold x^*$ was recovered using $4$ iterations of the SCP algorithm; the non-zero blocks of $\bold x^*$ are depicted by the red arrows.}
\label{fig:single_run}
\end{figure}

\begin{figure}[!t]
\centerline{\includegraphics[trim={0.0cm, 0.1cm, 0.0cm, 0.1cm},clip, width=0.92\columnwidth]{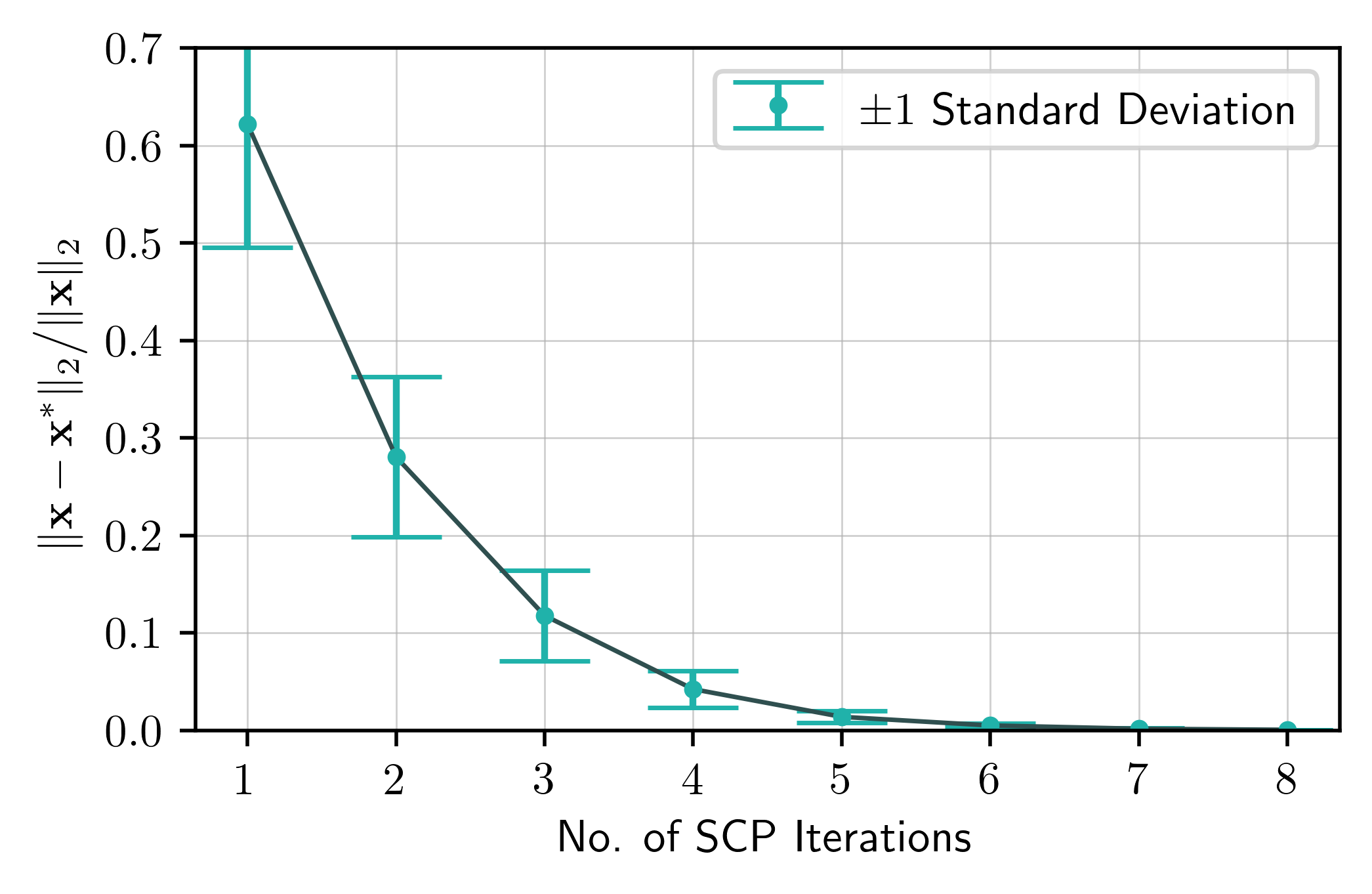}}
\caption{The relative approximation error, $\|\bold x - \bold x^*\|_2 / \| \bold x \|_2$, as a function of the number of SCP iterations used to recover the localization errors. The standard deviation of the error was computed numerically by simulating $250$ Monte Carlo trials.}
\label{fig:error_vs_scp_iterations}
\end{figure}

\subsection{Uncorrelated vs. Fully Correlated Errors}
\label{subsection:n_faulty}

So far, the localization errors (i.e., the non-zero blocks of $\bold x$) were sampled in an independent and identically distributed (i.i.d.) manner, such that the localization errors are \textit{uncorrelated}, i.e., $\mathbb E\left[\bold x[i] \bold x[j]^\top\right]=\bold 0$ whenever $i\neq j$. The other extreme case is that of \textit{fully correlated} localization errors, i.e., $\bold x[i]=\bold x[j],\ \forall i,j \in \mathcal D$. For either type of localization errors, we repeat the simulation scenario considered in Section \ref{subsection:sim_noisefree} for $250$ Monte Carlo trials. In each simulation, the identities of the agents in $\mathcal D$ were selected at random.

Given the block vector $\bold x^*$ obtained using the SCP algorithm, let $\hat {\mathcal D}$ denote the locations of its non-zero blocks, such that $\|\bold x[i]\|_2 > 0 \Leftrightarrow i \in \hat {\mathcal D}$.
Figure \ref{fig:error_vs_n_faulty} shows the percentage of Monte Carlo trials in which the SCP algorithm was able to identify the localization error vector correctly (i.e., $\hat {\mathcal D} = \mathcal D$) within $4$ iterations. It can be seen that the SCP algorithm is able to recover a smaller number of localization errors when the errors are fully correlated. In either case, up to $4$ localization errors were identified with a high degree of certainty.

\begin{figure}[!t]
\centerline{\includegraphics[trim={0.0cm, 0.1cm, 0.0cm, 0.1cm},clip, width=0.92\columnwidth]{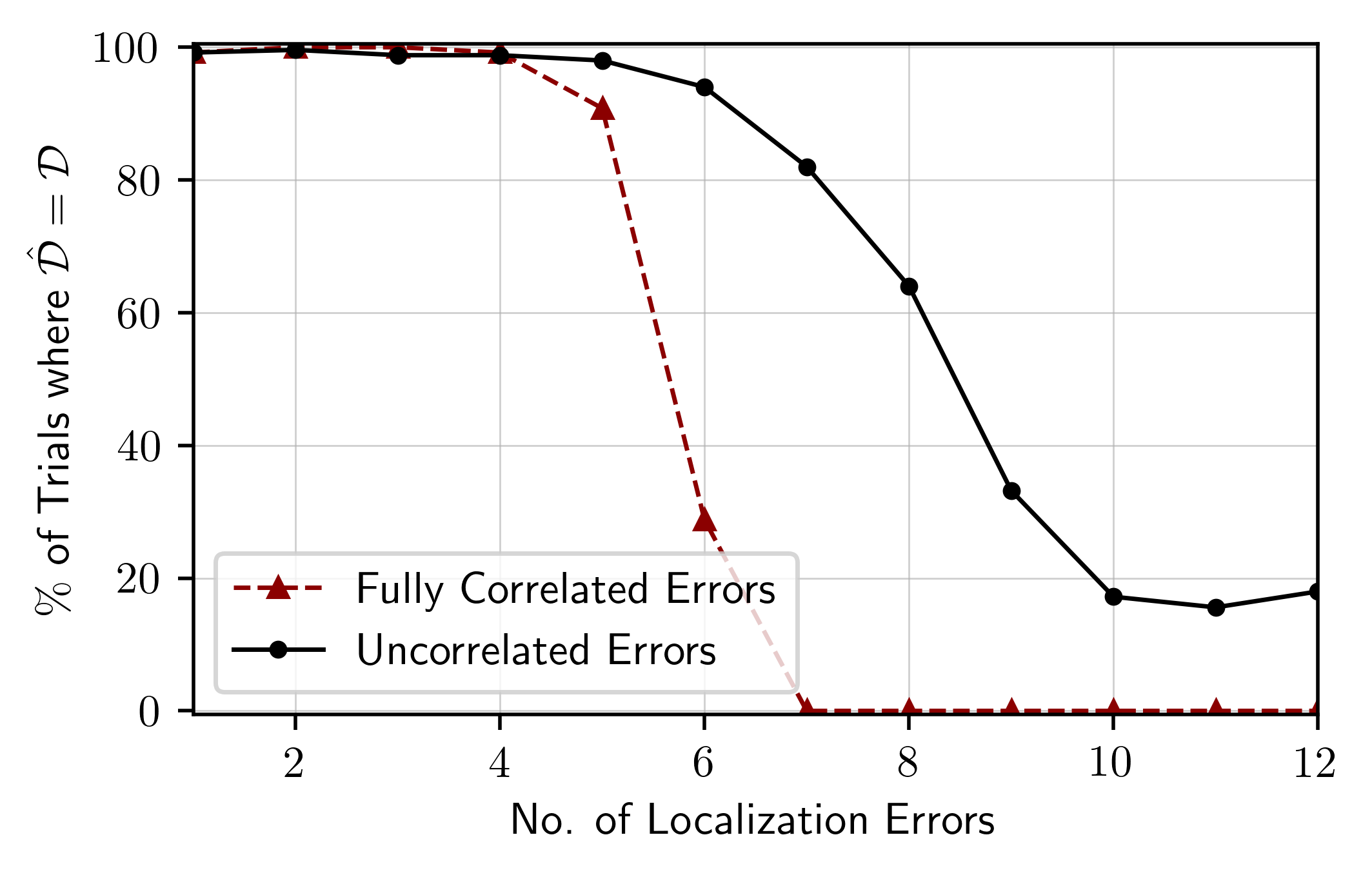}}
\caption{The vertical axis corresponds to the percentage of Monte Carlo trials (out of $250$ trials) in which the SCP algorithm was able to correctly identify the set of UAVs which have localization errors, i.e., $\hat {\mathcal D} = \mathcal D$.}
\label{fig:error_vs_n_faulty}
\end{figure}

\subsection{Robustness}
Finally, we assess the robustness of the $l_2/l_1$ minimization approach subject to measurement noise and imperfect position estimates. We consider the simulation scenario described in Section \ref{subsection:sim_noisefree}, wherein $4$ of the UAVs (selected at random) are subject to independent random localization errors. As discussed in Section \ref{sec:robustness}, the magnitude of measurement noise is dictated by the quantity $\epsilon$. We consider the following values of $\epsilon$: $0, 1, 2, 3, 4,$ and $5$, for which the corresponding slackness reduction parameter $\rho$ of the SCP algorithm is set to $3.0, 2.0, 1.5, 1.5, 1.3,$ and $1.2$, respectively. The motivation for choosing the foregoing values of $\rho$ is that when the magnitude of measurement noise is high, a greater amount of slack is required in the constraint of problem P3 in order to make it feasible. Similarly, the initial constraint slackness $\epsilon_0$ of the SCP algorithm is set to $4 + \epsilon$, wherein the first term accommodates the $4$ localization errors, whereas the second term accommodates the measurement noise. For each simulation, we vary the value of $\kappa$ (which dictates the amount of imperfection in the position estimates of the agents in $\mathcal D^\complement$) between $0.0$ to $0.9$.

The measurement noise vector is sampled from the uniform random distribution over the surface of a sphere centred at the origin, having the radius $\epsilon$. Similarly, the position estimates are generated such that for each $i \in \mathcal D^\complement$, $\hat {\bold p}[i]$ is sampled from the uniform random distribution over the surface of a sphere centred at $\bold p[i]$, having the radius $\kappa$. Thus, both the measurement noise and the imperfection in the position estimates are random, whereas their corresponding magnitudes are chosen deterministically as $\epsilon$ and $\kappa$.
This allows us to visualize the dependency of the approximation error on $\epsilon$ and $\kappa$, which is shown in Fig. \ref{fig:error_vs_measurement_error}. 
Observe that the plots in Fig. \ref{fig:error_vs_measurement_error} are bounded from above by $1.0$, indicating that while the performance of $l_2/l_1$ minimization degrades as $\epsilon$ and/or $\kappa$ increase, the $l_2/l_1$ minimization approach can partially recover the localization error vector $\bold x$ up to some approximation error. 

\begin{figure}[!t]
\centerline{\includegraphics[trim={0.0cm, 0.1cm, 0.0cm, 0.1cm},clip, width=0.92\columnwidth]{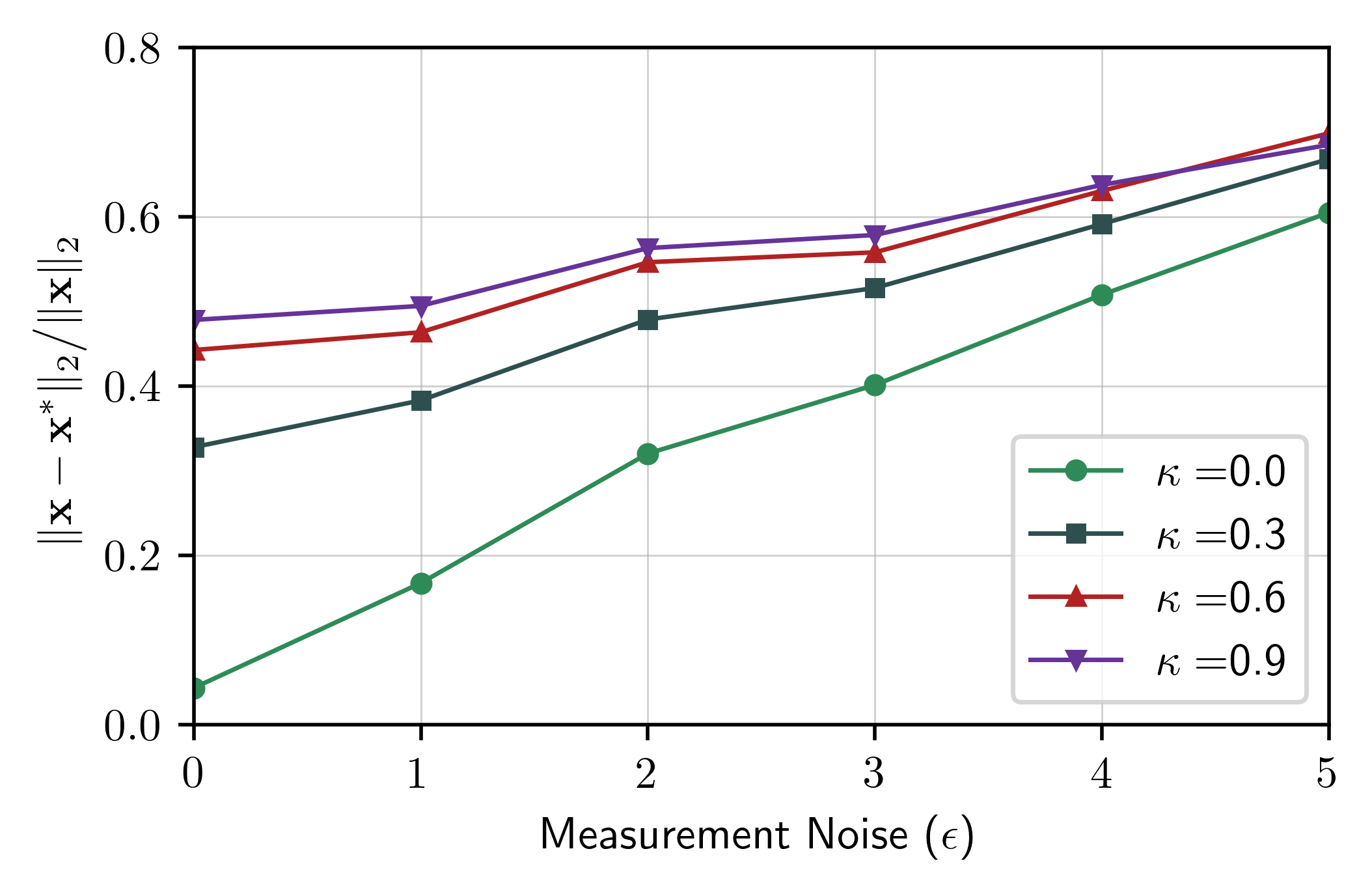}}
\caption{The relative approximation error, $\|\bold x - \bold x^*\|_2 / \| \bold x \|_2$, as a function of the magnitude of measurement noise ($\epsilon$) and the imperfection in the position estimates ($\kappa$), averaged over $250$ Monte Carlo trials.}
\label{fig:error_vs_measurement_error}
\end{figure}

\section{Conclusions}
In this paper, we developed a novel approach for recovering (i.e., identifying and reconstructing) the localization errors in a sensor network by processing inter-agent measurements such as distances and/or bearings. The proposed method brings together analytical tools from compressive sensing and rigidity theory, allowing us to establish necessary and sufficient conditions on the sensor network's configuration and connectivity which guarantee that a given number of localization errors can be recovered uniquely. The proposed $l_2/l_1$ minimization method for recovering the localization errors is an efficient convex optimization problem that circumvents the combinatorial complexity of the approaches found in the literature. Moreover, the proposed method does not require a subset of the agents of the sensor network to be labeled as the \textit{anchors}, which differentiates our method from each of the existing approaches for sensor network localization using inter-agent measurements. Using a numerical example, we demonstrated our findings and benchmarked the performance and robustness properties of the proposed $l_2/l_1$ minimization approach for detecting localization errors.

Future work on this topic will focus on developing a distributed algorithm for solving the $l_2/l_1$ minimization problem, which can be used to recover the localization errors in large-scale sensor networks consisting of autonomous agents.

\section*{References}
\bibliographystyle{ieeetr}
\bibliography{IEEEabrv,refs} 
\appendix[Proof of Theorem \ref{thm:l1_robust}]
\begin{proof}
Given a vector $\bold v \in \mathbb R^{d|\mathcal V|}$, it can be uniquely decomposed as $\bold v = \bar{\bold v} + \bold{\tilde v}$, where $\bar{\bold v}\in \ker(\bold R)$ and $\bold{\tilde v}\in \ker(\bold R)^\perp$ are orthogonal. Thus, inequality (\ref{eq:robust_lq_cond}) can be rewritten as
\begin{equation}
\|\bold{\tilde v}_{\mathcal S} + \bar{\bold v}_{\mathcal S}\|_{2,1} - \tau\|\bold{\tilde v}_{\mathcal S^\complement}+\bar{\bold v}_{\mathcal S^\complement}\|_{2,1} \leq  \gamma \| \bold R \bold{\tilde v}\|_2
\label{eq:thm_robust_eq_1}
\end{equation}
Using the definition of $l_1$ block NSP (\ref{eq:block_nsp_condition}), we know that there exists some $\bar \tau$ such that $0<\bar \tau<1$ and for all subsets $\mathcal S\subseteq \{1, 2, \dots, |\mathcal V|\}$ with $|\mathcal S|\leq s$, we have
\begin{equation}
\|\bar{\bold v}_{\mathcal S}\|_{2,1} \leq \bar \tau \|\bar{\bold v}_{\mathcal S^\complement}\|_{2,1}
\label{eq:thm_robust_block_nsp_condition}
\end{equation}
To prove the theorem, we need to show that if the $l_1$ block NSP property (\ref{eq:thm_robust_block_nsp_condition}) holds, then so does (\ref{eq:thm_robust_eq_1}).

Observe that if $\bold{\tilde v}=\bold 0$, then (\ref{eq:thm_robust_block_nsp_condition}) implies (\ref{eq:thm_robust_eq_1}) if $\tau$ is chosen such that $\bar \tau \leq  \tau<1$. Hereafter, we consider the case where $\bold{\tilde v}\neq \bold 0$; it will be shown that (\ref{eq:thm_robust_block_nsp_condition}) implies (\ref{eq:thm_robust_eq_1})
if $\tau$ and $\gamma$ are chosen such that $\bar \tau \leq  \tau<1$ and 
\begin{equation}
\gamma  = (1+\bar \tau + \tau)
\sqrt{\frac{ |\mathcal V|}{{\tilde\lambda(\bold R^\top \bold R)}
}
}
\label{eq:gamma_def}
\end{equation}
where ${\tilde \lambda}(\bold R^\top \bold R)$ denotes the smallest non-zero eigenvalue of $\bold R^\top \bold R$. Rewriting (\ref{eq:thm_robust_eq_1}), we have
\begin{align}
 \frac{\|\bold{\tilde v}_{\mathcal S} + \bar{\bold v}_{\mathcal S}\|_{2,1} - \tau\|\bold{\tilde v}_{\mathcal S^\complement}+\bar{\bold v}_{\mathcal S^\complement}\|_{2,1}}
{\| \bold R \bold{\tilde v}\|_2}  \qquad \qquad \qquad \qquad &\nonumber
\\
\leq 
\frac{\|\bold{\tilde v}_{\mathcal S} + \bar{\bold v}_{\mathcal S}\|_{2,1} - \tau\|\bold{\tilde v}_{\mathcal S^\complement}+\bar{\bold v}_{\mathcal S^\complement}\|_{2,1}}{
{\tilde\lambda(\bold R^\top \bold R)}^{\sfrac{1}{2}}
\|\bold{\tilde v}\|_2}
&
\\
\leq 
\frac{\|\bold{\tilde v}_{\mathcal S} + \bar{\bold v}_{\mathcal S}\|_{2,1} - \tau\|\bold{\tilde v}_{\mathcal S^\complement}+\bar{\bold v}_{\mathcal S^\complement}\|_{2,1}}{
{\tilde\lambda(\bold R^\top \bold R)}^{\sfrac{1}{2}}{|\mathcal V|}^{-\sfrac{1}{2}}
\|\bold{\tilde v}\|_{2,1}}&
\label{eq:thm_robust_eq_2}
\end{align}
where in the last step, we used the equivalence of the norms $\|{}\cdot{}\|_{2}$ and $\|{}\cdot{}\|_{2,1}$ \cite[Eq. 15]{robust_NSP_2017}.
Focusing on the numerator of (\ref{eq:thm_robust_eq_2}), and using the fact that $\|{}\cdot{}\|_{2,q}$ is a norm for $q\geq1$ \cite{robust_NSP_2017} as well as (\ref{eq:thm_robust_block_nsp_condition}), we have
\begin{align}
\|\bold{\tilde v}_{\mathcal S} + &\bar{\bold v}_{\mathcal S}\|_{2,1} - \tau\|\bold{\tilde v}_{\mathcal S^\complement}+\bar{\bold v}_{\mathcal S^\complement}\|_{2,1}\nonumber\\
&\leq \|\bold{\tilde v}_{\mathcal S}\|_{2,1} +\|\bar{\bold v}_{\mathcal S}\|_{2,1} - \tau\|\bold{\tilde v}_{\mathcal S^\complement}+\bar{\bold v}_{\mathcal S^\complement}\|_{2,1}
 \quad\\
&\leq \|\bold{\tilde v}_{\mathcal S}\|_{2,1} + \bar \tau \|\bar{\bold v}_{\mathcal S^\complement}\|_{2,1} - \tau\|\bold{\tilde v}_{\mathcal S^\complement}+\bar{\bold v}_{\mathcal S^\complement}\|_{2,1}
\\
&\leq \|\bold{\tilde v}_{\mathcal S}\|_{2,1} + \bar \tau \|\bar{\bold v}_{\mathcal S^\complement}\|_{2,1} - \tau\Big|\|\bold{\tilde v}_{\mathcal S^\complement}\|_{2,1} - \|\bar{\bold v}_{\mathcal S^\complement}\|_{2,1}\Big|
\label{eq:thm_robust_eq_3}
\end{align}
where in the last step we used the reverse triangle inequality for a vector norm $\|{}\cdot{}\|$, which establishes for vectors $\bold v_1$ and $\bold v_2$ that $\|\bold v_1 + \bold v_2\|\geq\big|\|\bold v_1\| - \|\bold v_2\|\big|$ \cite[Sec. 50.1]{hogben2013handbook}.
\vspace{4pt}
\begin{itemize}
    \item \textit{Case 1:} If $\|\bold{\tilde v}_{\mathcal S^\complement}\|_{2,1} \geq \|\bar{\bold v}_{\mathcal S^\complement}\|_{2,1}$, then (\ref{eq:thm_robust_eq_3}) simplifies to
\begin{align}
\|&\bold{\tilde v}_{\mathcal S}\|_{2,1} + \bar \tau \|\bar{\bold v}_{\mathcal S^\complement}\|_{2,1} - \tau\|\bold{\tilde v}_{\mathcal S^\complement}\|_{2,1} + \tau\|\bar{\bold v}_{\mathcal S^\complement}\|_{2,1}\\
&= \|\bold{\tilde v}\|_{2,1} - (1+\tau)\|\bold{\tilde v}_{\mathcal S^\complement}\|_{2,1} + (\bar \tau + \tau)\|\bar{\bold v}_{\mathcal S^\complement}\|_{2,1}
\end{align}
Substituting back in (\ref{eq:thm_robust_eq_2}), we have that
\begin{align}
\frac{\|\bold{\tilde v}_{\mathcal S} + \bar{\bold v}_{\mathcal S}\|_{2,1} - \tau\|\bold{\tilde v}_{\mathcal S^\complement}+\bar{\bold v}_{\mathcal S^\complement}\|_{2,1}}
{\| \bold R \bold{\tilde v}\|_2}  \qquad  \qquad &\nonumber
\\
\leq \frac{1+\bar \tau + \tau}{{\tilde\lambda(\bold R^\top \bold R)}^{\sfrac{1}{2}}{|\mathcal V|}^{-\sfrac{1}{2}}
} = \gamma
\label{eq:thm_proof_gamma_bound}
\end{align}
which is the desired inequality.

\item \textit{Case 2:} If $\|\bold{\tilde v}_{\mathcal S^\complement}\|_{2,1} <\|\bar{\bold v}_{\mathcal S^\complement}\|_{2,1}$, then using the fact that $\bar \tau\leq \tau<1$, (\ref{eq:thm_robust_eq_3}) simplifies to
\begin{align}
\|\bold {\tilde v}_{\mathcal S}\|_{2,1} &+ \bar \tau \|\bar {\bold v}_{\mathcal S^\complement}\|_{2,1} - \tau \|\bar {\bold v}_{\mathcal S^\complement}\|_{2,1} + \tau \|\bold {\tilde v}_{\mathcal S^\complement}\|_{2,1}
\nonumber \\
&
\leq
\|\bold {\tilde v}_{\mathcal S}\|_{2,1} 
+ \|\bold {\tilde v}_{\mathcal S^\complement}\|_{2,1} 
+ (\bar \tau - \tau)\|\bar {\bold v}_{\mathcal S^\complement}\|_{2,1}
\\
&
= \| \bold {\tilde v}\|_{2,1}
\end{align}
By substituting in (\ref{eq:thm_robust_eq_2}), it can be seen that inequality (\ref{eq:thm_proof_gamma_bound}) holds in this case as well, which completes the proof.
\end{itemize}
\end{proof}
 
From the expression for $\gamma$ given in (\ref{eq:gamma_def}), it appears as though $\gamma$ should increase with the size of the sensor network, on the order of $|\mathcal V|^{\sfrac{1}{2}}$. However, we note that ${\tilde \lambda (\bold R^\top \bold R)}$ also increases with the size of the network \cite[Thm. 4.4]{jordan2022rigidity}. Determining the exact dependence of $\gamma$ on $|\mathcal V|$ is a possibility for future research.
\end{document}